\newtheorem{thm}{Theorem}[section]
\newtheorem{prop}[thm]{Proposition}
\newtheorem{lem}[thm]{Lemma}
\newtheorem{cor}[thm]{Corollary}
\newtheorem{clm}[thm]{Claim}
\DeclareMathOperator{\tr}{tr}
\newcommand{\Tr}[1]{\tr\left[#1\right]}
\theoremstyle{definition}
\newtheorem{definition}[thm]{Definition}
\theoremstyle{remark}
\numberwithin{equation}{section}
\newcommand{\C}{\mathbb{C}}  
\newcommand{\cV}{\mathcal{V}}
\newcommand{\sh}{\mathsf{h}}
\newcommand{\OO}{\mathrm{O}} 
\renewcommand{\>}{\rangle}
\newcommand{\<}{\langle}
\newcommand{\Dom}{\mathcal{F}}
\newcommand{\ffac}[2]{#1^{\underline{#2}}}
\newcommand{\fFac}[2]{(#1)^{\underline{#2}}}
\newcommand{\classes}{\mathcal{C}} 
\newcommand{\type}[1]{{\mathrm{d}(#1)}} 
\newcommand{\len}[1]{{l(#1)}} 
\newcommand{\hooks}[1]{{\mathsf{H}(#1)}}
\newcommand{\Sym}{S} 
\definecolor{DarkGreen}{rgb}{0,0.6,0}
\definecolor{NEWColor}{rgb}{0,0.4,0.6}
\newcommand{\junk}{T}
\newcommand{\junkSet}{\mathcal{T}}
\newcommand{\Adv}{\mathsf{Adv}}
\newcommand{\regI}{\mathsf{I}}
\newcommand{\regO}{\mathsf{O}}
\newcommand{\regT}{\mathsf{T}}
\newcommand{\regW}{\mathsf{W}}
\begin{document}


\title {A Tight Lower Bound For Non-Coherent Index Erasure}
\author{Nathan Lindzey \\ Department of Computer Science \\ 
University of Colorado at Boulder, USA \\ nathan.lindzey@colorado.edu
\and
Ansis Rosmanis \\ Graduate School of Mathematics \\ Nagoya University, Japan \\ ansis.rosmanis@math.nagoya-u.ac.jp
\thanks{The author is supported by JSPS KAKENHI Grant Number JP20H05966 and MEXT Quantum Leap Flagship Program (MEXT Q-LEAP) Grant Number JPMXS0120319794. Part of this work was done while he was a JSPS International Research Fellow supported by the JSPS KAKENHI Grant Number JP19F19079, and when he was at the Centre for Quantum Technologies at the National University of Singapore supported by the Singapore Ministry of Education and the National Research Foundation under grant R-710-000-012-135.}
}

\maketitle

\begin{abstract} 
	The \emph{index erasure problem} is a quantum state generation problem that asks a quantum computer to prepare a uniform superposition over the image of an injective function given by an oracle. We prove a tight $\Omega(\sqrt{n})$ lower bound on the quantum query complexity of the \emph{non-coherent} case of the problem, where, in addition to preparing the required superposition, the algorithm is allowed to leave the ancillary memory in an arbitrary function-dependent state. This resolves an open question of Ambainis et al., who gave a tight bound for the coherent case, the case where the ancillary memory must return to its initial state. 
	
	To prove our main result, we first extend the \emph{automorphism principle} of H\o{}yer et al. to the \emph{general adversary method} of Lee et al. for state generation problems, which allows one to exploit the symmetries of these problems to lower bound their quantum query complexity. Using this method, we establish a strong connection between the quantum query complexity of non-coherent symmetric state generation problems and the \emph{Krein parameters} of an association scheme defined on injective functions. In particular, we use the spherical harmonics a finite symmetric Gelfand pair associated with the space of injective functions to obtain asymptotic bounds on certain Krein parameters, from which the main result follows.
\end{abstract}

\section{Introduction}

For proving lower bounds in the \emph{oracle query model}, one assumes access to an oracle $O_f$ that evaluates a black-box function $f\colon [n]\rightarrow [m]$ on input queries, where $[n]:=\{1,2,\cdots,n\}$ and $[m]:=\{1,2,\cdots,m\}$, and the goal is to prove that any algorithm for solving the computational problem at hand must make a certain number of oracle queries. This principle for proving lower bounds applies to both classical and quantum computation, and in the latter we allow the oracle to be queried in a superposition. 

Quantum query algorithms are known to surpass their classical counterparts for many important classical tasks, such as unstructured search, game tree evaluation, random walks, and others (see~\cite{Montanaro15,Ambainis2018} for recent surveys). Classical tasks aside, one may also be interested in \emph{quantum mechanical tasks}, such as \emph{quantum state generation}. A quantum state generation problem simply asks for a certain quantum state $\ket{\psi_f}$ to be generated on the target register. In this paper, we consider a particular state generation problem known as \textsc{Index Erasure}.

Given an injective function $f\colon[n] \rightarrow [m]$ via a black-box oracle $O_f$,  \textsc{Index Erasure} is the task of preparing the quantum state that is the uniform superposition over the image of $f$, namely,
$$\ket{\psi_f}:=\frac{1}{\sqrt{n}} \sum_{x=1}^n \ket{f(x)}.$$
The name of the problem stems from the fact that a quantum computer can prepare the uniform superposition $\frac{1}{\sqrt{n}} \sum_{x=1}^n \ket{x} \ket{f(x)}$ using a single query to $O_f$, yet the task of ignoring or \emph{erasing} the first register that records the \emph{index} $x$ is non-trivial. Indeed, if one could solve \textsc{Index Erasure} using a poly-logarithmic number of queries, one would obtain a time-efficient algorithm for \textsc{Graph Isomorphism} (see Section~\ref{app:GraphIso}).

The question of the complexity of \textsc{Index Erasure} was first raised by Shi in~\cite{Shi02},
where he already observed that the problem can be solved in $\mathrm{O}(\sqrt{n})$ queries by an algorithm based on Grover's search. In the same paper, Shi also introduced the \textsc{Set Equality} problem, which asks to decide whether two injective functions $f,f'$ given via black-box oracles $O_f,O_{f'}$ have the same image or have disjoint images, given a promise that either is the case. \textsc{Set Equality} can be easily reduced to \textsc{Index Erasure} via the swap test, increasing the number of oracle queries by at most a constant factor; therefore, when Midrij\=anis presented an $\Omega((n/\log n)^{1/5})$ lower bound on the quantum query complexity of \textsc{Set Equality}~\cite{Mid04}, the same lower bound automatically applied to \textsc{Index Erasure}, ruling out the existence of poly-logarithmic query algorithms for these two problems.

Quantum state generation comes in two forms: the \emph{coherent state generation}, where all memory aside from the target state must return to its initial state,
$\ket{\mathbf{0}}:=\ket{0\cdots0}$,  
and the \emph{non-coherent state generation}, where there is no such a requirement, namely,
where the ancillary memory can remain in some function-dependent state $\ket{t_f}$.
Ambainis, Magnin, Roetteler, and Roland devised the \emph{hybrid adversary method} \cite{AMRR11}, which they used to prove a tight $\Omega(\sqrt{n})$ lower bound for \textsc{Index Erasure} in the \emph{coherent} regime, and left the non-coherent case as an open question. Later, the lower bound for \textsc{Set Equality} was improved to a tight $\Omega(n^{1/3})$ \cite{Zha15,BR18}, which in turn led to an improved query lower bound for the non-coherent \textsc{Index Erasure}. 

In this paper, we close the gap for the non-coherent \textsc{Index Erasure} problem, by proving a tight lower bound on its quantum query complexity under the condition that the range of the black-box function $f$ is sufficiently large.  More formally, we show the following.  
\begin{thm}[Main Result]\label{thm:main}
	The bounded-error quantum query complexity of \textsc{Index Erasure} is $\Theta(\sqrt{n})$ in the non-coherent state generation regime, provided that $m\ge n^{3+\epsilon}$ for some $\epsilon > 0$.
\end{thm}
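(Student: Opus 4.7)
The plan is to apply the adversary method in the form suited to non-coherent state generation. For such problems the lower bound takes the shape $Q \gtrsim \lVert \Gamma \rVert / \max_{j \in [n]} \lVert \Gamma \circ \Delta_j \rVert$, where $\Gamma$ is a Hermitian matrix indexed by injective functions $f\colon [n] \to [m]$ and each $\Delta_j$ records whether two inputs differ in the query coordinate $j$. The non-coherent regime weakens the feasibility constraint on $\Gamma$ substantially: because the algorithm is free to leave the workspace in any junk state, the adversary matrix must only be orthogonal to target Gram matrices of the form $[\langle \psi_f | \psi_{f'} \rangle \cdot T_{f,f'}]$ for some positive semidefinite junk Gram matrix $T$ with unit diagonal. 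I would first write the corresponding min--max SDP, as in the framework of~\cite{AMRR11}, tailored to the non-coherent case.

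Next, I would symmetrize. The input set of injective functions is a homogeneous space for $G = S_n \times S_m$, with $S_n$ permuting the domain and $S_m$ the codomain, and both the feasibility constraints and the objective are $G$-equivariant. Convexity of the SDP lets me restrict $\Gamma$ to the commutant of this action, which is precisely the Bose--Mesner algebra $\mathcal{B}$ of a symmetric association scheme $\mathcal{A}$ on injective functions. Its classes are the orbits of $G$ on ordered pairs $(f,f')$; concretely these are parametrized by the cycle type of the partial bijection $f' \circ f^{-1}$ together with the size of its support, a hybrid of permutation-like and subset-like data. This scheme, whose structure is the chief combinatorial object of the paper, is of independent interest.

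Within $\mathcal{B}$ I decompose $\Gamma = \sum_\lambda \gamma_\lambda E_\lambda$ along the primitive idempotents $E_\lambda$ arising from the irreducible constituents of the $G$-representation on injective functions. The spectral norm $\lVert \Gamma \rVert$ then reads off directly from the coefficients $\gamma_\lambda$, while the Hadamard products $\lVert \Gamma \circ \Delta_j \rVert$ are controlled by the Krein parameters $q_{\mu\nu}^\lambda$ of $\mathcal{A}$, since these are precisely the structure constants governing Schur products of primitive idempotents. I expect to concentrate the support of $\Gamma$ on the idempotent corresponding to the standard hook-shape irreducible $(n-1,1)$ of $S_n$, as this is the smallest nontrivial representation that feels a single-index change in $f$ yet admits a highly symmetric weighting that controls the Schur-product norm.

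The main obstacle is the evaluation of these Krein parameters. Unlike the Johnson or Hamming schemes, the partial-permutation scheme is a genuine hybrid and has no ready-made formula in the literature; the calculation must combine the representation theory of $S_n$ with that of $S_m$ acting on $n$-subsets. The hypothesis $m \ge n^{3\sqrt n}$ almost certainly enters at this point: for such enormous $m$, two random injective functions are overwhelmingly likely to have disjoint images, so the ``empty overlap'' orbit dominates the scheme and the Krein parameters for the targeted idempotent should be approximable by those of a simpler purely permutation-theoretic sub-scheme, with a tiny error term that gets absorbed into lower-order contributions. Once these estimates are in hand, plugging them into $\lVert \Gamma \rVert / \max_j \lVert \Gamma \circ \Delta_j \rVert$ should yield the desired $\Omega(\sqrt{n})$ bound, matching Shi's Grover-based upper bound and resolving the open question of~\cite{AMRR11}.
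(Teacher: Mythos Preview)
Your framework is on target---symmetrize via $S_n\times S_m$, work in the Bose--Mesner algebra of the partial-permutation scheme, and invoke its Krein parameters---but the plan has the architecture inverted in two places and as written would not yield the bound.

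First, the non-coherent regime \emph{strengthens}, not weakens, the constraint on $\Gamma$: in the coherent case the junk matrix is fixed to $T=J$, whereas non-coherently one must bound $\eta:=\max_{T}\mathrm{Tr}\bigl[\Pi_\Gamma(T^\odot\circ T)\bigr]/|\mathcal F|$ over \emph{all} state matrices $T$, and since $J$ is among them $\eta$ can only increase. This is precisely why \cite{AMRR11} settled the coherent case but left the non-coherent one open. Second, you misplace both the form of $\Gamma$ and where the Krein parameters enter. The paper reuses the adversary matrix of \cite{AMRR11} verbatim, $\Gamma=\sum_{k\le\sqrt n}(1-k/\sqrt n)\sum_{\theta\vdash k}E_{(n-k,\theta)\otimes(m-k,\theta)}$, supported on \emph{all} minimal irreps up to level $\sqrt n$ with a triangular weighting; a single $(n-1,1)$ idempotent would not give $\|\Gamma\circ\Delta_j\|=O(1/\sqrt n)$, and that norm bound is in any case not a Krein-parameter calculation (the $\Delta_j$ are not idempotents---it is handled separately in \cite{AMRR11} and inherited here for free). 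All the genuinely new work lies in the $\eta$ bound: reduce the maximum over $T$ to the extreme points $T_{\lambda\otimes\lambda'}$, use the Johnson-scheme block structure of $E_{(n)\otimes(m-1,1)}$ to localize to a single minimal irrep $\lambda\otimes\bar\lambda$, and then show $\mathrm{Tr}\bigl[E_{\lambda\otimes\bar\lambda}(T_{(n)\otimes(m-1,1)}\circ T_{\lambda\otimes\bar\lambda})\bigr]/|\mathcal F|=o(1)$, which \emph{is} the relevant Krein parameter. The key technical lemma expresses dual eigenvalues of minimal irreps in $\mathcal A_{n,m}$ through dual eigenvalues of maximal irreps in the smaller scheme $\mathcal A_{k,n}$; the hypothesis $m\ge n^{3\sqrt n}$ is what kills the $\ell>k$ tail of the resulting sum (bounded by $O(n^{2k+3}/m)$ for $k\le\sqrt n$), not an ``empty-overlap dominates'' heuristic.
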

\noindent We outline the proof of Theorem~\ref{thm:main} below.

\subsection{Outline of the Proof of Theorem~\ref{thm:main}}\label{ssec:outline}

The symmetries of \textsc{Index Erasure} are paramount in our proof (see Section~\ref{sec:prelims} and Section~\ref{sec:assoc} for a detailed discussion of these symmetries and any undefined terminology in this outline). The product $S_n\times S_m$ of two symmetric groups acts on a function $f\colon[n]\rightarrow[m]$ as
$(\pi,\rho)\colon f\mapsto \rho *f*\pi^{-1}$, where $(\pi,\rho)\in S_n\times S_m$ and $*$ denotes the composition of functions. This group action on injective functions defines a representation of $S_n\times S_m$. This representation is multiplicity-free, meaning that it contains no more than one instance of any irrep (irreducible representation) of $S_n\times S_m$.
Moreover, it consists of those and only those irreps $\lambda\otimes\lambda'$ where the Young diagram $\lambda\vdash n$ is contained in the Young diagram $\lambda' \vdash m$ and the skew shape $\lambda'/\lambda$ has no more than one cell per column.
Throughout the paper, we often abuse the terminology and we interchangeably use the terms partition $\lambda$ of $n$, denoted $\lambda\vdash n$, the irreducible representation (irrep) corresponding to $\lambda$, and the $n$-cell Young diagram corresponding to $\lambda$.

The following class of irreps plays a distinguished role in our proof. Given $\lambda\vdash n$, we call the irrep $\lambda\otimes\bar\lambda$ where $\bar\lambda\vdash m$ is obtained from $\lambda$ by adding $m-n$ cells to the first row of $\lambda$ a \emph{minimal} irrep. 
In other words, if $\theta \vdash k$ and $\lambda:=(n-k,\theta)\vdash n$, then $(n-k,\theta)\otimes(m-k,\theta)$ is a minimal irrep.
For example, if $m = 12$, $n = 6$, $k = 3$, and $\theta = (2,1)$, then the minimal 
irrep with respect to $\theta$ is
\begin{align}\label{eq:minmax}
\ytableausetup{mathmode,boxsize=1.0em}
\begin{ytableau}
\bullet & \bullet & \bullet  & \empty &  \empty & \empty & \empty & \empty & \empty \\
\bullet  & \bullet   \\ 
\bullet  \\  
\end{ytableau}
\end{align}
where the $\bullet$'s indicate an $S_n$-irrep and the $\Box$'s indicate an $S_m$-irrep. To lower bound the quantum query complexity of the non-coherent \textsc{Index Erasure}, we use essentially the same adversary matrix $\Gamma$ that~\cite{AMRR11} used for the coherent \textsc{Index Erasure}, which is specified through minimal irreps (see Section~\ref{sec:AdvToKrein} for a formal definition of this matrix). 

An adversary matrix is a symmetric real matrix whose rows and columns are labeled by all the functions in the domain of the problem, and it is the central object of most adversary methods. In our case, the adversary matrix acts on the same $\ffac{m}{n}$\,-dimensional space as the representation matrices of $S_n\times S_m$ mentioned above, where $\ffac{m}{n}:=m!/(m-n)!$ is the total number of functions. Similarly to~\cite{AMRR11}, we choose
\[
\Gamma:=\sum_{k=0}^{\sqrt{n}-1}\left(\sqrt{n} - k\right)
\sum_{\theta\vdash k}E_{(n-k,\theta)\otimes(m-k,\theta)},
\]
where $E_{\lambda\otimes\lambda'}$ is the orthogonal projector on the irrep $\lambda\otimes\lambda'$ (note that we have only used projectors on certain minimal irreps to construct $\Gamma$). We also note that the Gram matrices $T_{\lambda\otimes\lambda'}=\ffac{m}{n} {E_{\lambda\otimes\lambda'}}/{d_{\lambda\otimes\lambda'}}$, where $d_{\lambda\otimes\lambda'}:=\Tr{E_{\lambda\otimes\lambda'}}$ is the dimension of $\lambda\otimes\lambda'$, play an important role in our proof.

In order to take advantage of the inherent symmetries of the \textsc{Index Erasure} problem, we first extend the the \emph{automorphism principle} of H\o{}yer, Lee, and {\v S}palek~\cite{HLS07} to the \emph{general adversary method} for state generation and conversion problems~\cite{LMRSS11} (see Corollary~\ref{cor:adv} and Theorem~\ref{thm:auto}). This extension leads us to consider the Gram matrix corresponding to the final state $\ket{\psi_f,t_f}$ of an algorithm run with oracle $O_f$ (assuming no error).
The Gram matrix corresponding to $\ket{\psi_f}$ is
\[
\frac{n}{m}T_{(n)\otimes(m)}+\left(1-\frac{n}{m}\right)T_{(n)\otimes(m-1,1)}=:T^{\odot},
\]
therefore the Gram matrix corresponding to $\ket{\psi_f,t_f}$ is $T^\odot\circ T$, where $T_{f,f'}:=\< t_f| t_{f'}\>$ and $\circ$ denotes the Schur (i.e., entrywise) matrix product.
For the coherent regime lower bound, $\<\mathbf{0}|\mathbf{0}\>=1$ and $T=J=T_{(n)\otimes(m)}$ is the all-ones matrix. For the non-coherent regime, the Gram matrix $T$ can be arbitrary, but one of the consequences of the generalization of the automorphism principle is that it suffices to consider $T$ such that $T_{f,f'}=T_{\sigma(f),\sigma(f')}$ for all functions $f,f'$ and all $\sigma\in S_n\times S_m$.

To prove the $\Omega(\sqrt n)$ lower bound, we must show, for all such Gram matrices $T$, that
\begin{equation}\label{eq:IntroCond0}
\Tr{\Pi_\Gamma\frac{T^\odot\circ T}{\ffac{m}{n}}} = o(1),
\end{equation}
where $\Pi_\Gamma$ is the orthogonal projector on the image of $\Gamma$, and that $\|\Gamma\circ\Delta_x\|=\mathrm{O}(1)$ for all $x\in[n]$, where $\Delta_x$ is the binary matrix with $(\Delta_x)_{f,f'}:=1$ if and only if $f(x)\ne f'(x)$.\footnote{The terms in condition (\ref{eq:IntroCond0}) and similar expressions are written in such a way to emphasize that $\frac{T^\odot\circ T}{\ffac{m}{n}}$ is a density operator.}
Here we only need to prove the former condition because we use essentially the same adversary matrix as \cite{AMRR11}, and the latter condition is shown in their work. On the other hand, showing condition~(\ref{eq:IntroCond0}) was a triviality in~\cite{AMRR11} because $T = J$ in the coherent regime and thus the trace evaluates to $n/m$. Showing that condition (\ref{eq:IntroCond0}) holds is the main technical contribution of this work, and appears to require significantly more algebraic results on the space of injective functions than the main technical result of~\cite{AMRR11}, which we develop in Sections~\ref{sec:prelims} and~\ref{sec:assoc}. 

We now present the three main simplifying steps used to narrow the scope of condition~(\ref{eq:IntroCond0}).
First, we use linearity to show that it suffices to prove
\[
\Tr{\Pi_\Gamma\frac{T_{(n)\otimes(m-1,1)}\circ T_{\lambda\otimes\lambda'}}{\ffac{m}{n}}} = o(1)
\]
for all irreps $\lambda\otimes\lambda'$.
That is, we can restrict our attention from a continuum of choices for $T$ to a finite set $\{T_{\lambda\otimes\lambda'}\}_{\lambda\otimes\lambda'}$ of choices, where we have also used that the term $T_{(n)\otimes(m-1,1)}$ ``dominates'' $T_{(n)\otimes(m)}$ in $T^\odot$ (since we assume that $m \gg n$).

Second, we use the connection between $T_{(n)\otimes(m-1,1)}$ and a specific primitive idempotent of the Johnson (association) scheme to obtain
\[
\Tr{E_{\lambda\otimes\bar\lambda}\frac{T_{(n)\otimes(m-1,1)}\circ T_{\lambda\otimes\lambda'}}{\ffac{m}{n}}} = o(1)
\]
as a sufficient condition, where we have to consider only Young diagrams $\lambda\vdash n$ that have less than $\sqrt{n}$ cells below the first row.

Third, for such $\lambda$, we show that the dimension of $\lambda\otimes\bar\lambda$ is much smaller than the dimension of any other $\lambda\otimes\lambda'$ (thus the nomenclature ``minimal irrep''); therefore, we show it suffices to prove for all $\lambda \vdash n$ that
\begin{equation}\label{eq:IntroCond3}
\Tr{E_{\lambda\otimes\bar\lambda}\frac{T_{(n)\otimes(m-1,1)}\circ T_{\lambda\otimes\bar\lambda}}{\ffac{m}{n}}} = o(1)
\end{equation} 

It is convenient to think of (\ref{eq:IntroCond0}) and its simplifications in terms of the following association scheme (see Section~\ref{sec:assoc} for more details).
For a pair of functions $(f,f')$, consider the orbit $\mathcal{O}_\mu:=\{(\sigma(f),\sigma(f'))\colon \sigma\in S_n\times S_m\}$, and let $A_\mu$ be the binary matrix with $(A_\mu)_{h,h'}=1$ if and only if $(h,h')\in\mathcal{O}_\mu$. Here we use $\mu$ to label distinct orbits and let $\classes_n$ be the set of all of them. The set of matrices $\{A_\mu\colon\mu\in\classes_n\}$ forms a symmetric association scheme, denoted $\mathcal{A}_{n,m}$, which has been called the \emph{injection scheme}~\cite{Munemasa01}.
Note that there is an obvious bijection between injective functions $f\colon [n]\rightarrow [m]$ and $n$-partial permutations of $[m]$ via $f\leftrightarrow(f(1),f(2),\cdots,f(n))$.\footnote{For describing particular injections, we prefer the latter representation as it is a bit more succinct.}

In the terminology of association schemes, the projectors $E_{\lambda\otimes\lambda'}$ are called the \emph{primitive idempotents}, and their entries corresponding to the orbit $\mathcal{O}_\mu$ multiplied by $\ffac{m}{n}$ are called \emph{dual eigenvalues} of the association scheme, which we denote as $q_{\lambda\otimes\lambda'}(\mu)$. The \emph{valency} $v_\mu$ is the size of $\mathcal{O}_\mu$ divided by $\ffac{m}{n}$, thus, in terms of dual eigenvalues, the left hand side of condition (\ref{eq:IntroCond3}) can be written as
\begin{equation}\label{eq:IntroDualEV}
\frac{\sum\nolimits_{\mu\in\classes_n} v_\mu^{}\cdot q_{(n)\otimes(m-1,1)}(\mu)\cdot q_{\lambda\otimes\bar\lambda}^2(\mu)}
{\ffac{m}{n}~d_{(n)\otimes(m-1,1)}d_{\lambda\otimes\bar\lambda}} = o(1).
\end{equation}
Finally, to prove (\ref{eq:IntroDualEV}), we consider the spherical harmonics of a finite symmetric Gelfand pair associated with the space of injective functions along with some estimates of combinatorial coefficients related to the unsigned Stirling numbers of the first kind. 
\subsection{Krein Parameters of the Injection Scheme}

In the context of quantum query complexity, the injection association scheme was already considered in~\cite{RB13}, where a conjecture on its eigenvalues implied tight adversary bounds for the \textsc{Collision} and \textsc{Set Equality} problems. Along these lines, our work shows a connection between quantum query complexity and the \emph{Krein parameters} $q_{i,j}(k)$ of association schemes (see Section~\ref{sec:AdvToKrein} for a formal definition).
Indeed, condition (\ref{eq:IntroCond3}) is equivalent to the conditions
\[
q_{\lambda\otimes\bar\lambda,\,\lambda\otimes\bar\lambda}((n)\otimes(m-1,1))=o(d_{\lambda\otimes\bar\lambda})
\quad\text{and}\quad
q_{\lambda\otimes\bar\lambda,\,(n)\otimes(m-1,1)}(\lambda\otimes\bar\lambda)=o(m)
\]
on the Krein parameters of $\mathcal{A}_{n,m}$, and (\ref{eq:IntroDualEV}) gives an expression of these parameters in terms of dual eigenvalues. 

The Krein parameters of an association scheme are important because they are the \emph{dual structure constants} of its corresponding \emph{Bose-Mesner algebra}. While the structure constants (i.e., intersection numbers) of Bose-Mesner algebras admit an obvious combinatorial meaning, its dual structure constants do not (e.g., they can be irrational) and are difficult to interpret. Indeed, the question of whether or not there exists a ``good" interpretation of these constants has often been asked in algebraic combinatorics, so we find their connection to quantum query complexity to be interesting.

\subsection{Connection to Graph Isomorphism through Set Equality}\label{app:GraphIso}

\newcommand{\nodeN}{k}

Given a graph $G$ of the vertex set $[\nodeN] := \{1,2,\cdots,\nodeN\}$ and a permutation $\pi\in S_\nodeN$, let $\pi \cdot G$ denote the graph obtained by the natural action of $\pi$ on the vertices of $G$. Let us assume that $G$ is rigid, so the ``permuting'' function $f\colon\pi\mapsto\pi\cdot G$ is injective, and let $O_f$ be the oracle evaluating this function.

Now suppose we have two rigid graphs $G_0,G_1$ of the vertex set $[\nodeN]$, and let $f_0,f_1$ be the corresponding permuting functions. If the two graphs are isomorphic, then $\mathrm{im}\,f_0=\mathrm{im}\,f_1$, while, if they are non-isomorphic, then $\mathrm{im}\,f_0\,\cap\,\mathrm{im}\,f_1=\emptyset$. As a result, we can employ a query-optimal algorithm for {\sc{Set Equality}}~\cite{BHT} which performs $\mathrm{O}(\sqrt[3]{\nodeN!})$ queries to oracles $O_{f_0},O_{f_1}$ and tests isomorphism of $G_0$ and $G_1$ without having to look into internal structure of these graphs.

Because of the optimality of the query algorithm for {\sc{Set Equality}}, one may want to say that any algorithm for {\sc{}Graph Isomorphism} that does not employ the internal structure of graphs must perform  $\Omega(\sqrt[3]{\nodeN!})$ queries to oracles $O_{f_0},O_{f_1}$.
However, to formally prove such a statement, one would have to formalize what is meant by ``not employing the internal structure of a graph''. A potential approach to do that would be to encrypt all graphs using a uniformly random injective function $\mathcal{E}$ from all $\nodeN$-vertex graphs to bit-strings of length $\mathsf{const}\cdot \nodeN$ and to provide an algorithm with encryptions $\mathcal{E}(G_0),\mathcal{E}(G_1)$ and an oracle-access to the function $F\colon (\pi,\mathcal{E}(G))\mapsto\mathcal{E}(\pi\cdot G)$.%
\footnote{The constant $\mathsf{const}$ must be at least $2$, but one may wish to choose it larger so that a randomly guessed bit string is unlikely to be an encryption of any graph.} 

We conjecture that, in this setting, testing isomorphism of $G_0$ and $G_1$ requires $\Theta(\sqrt[3]{\nodeN!})$ oracle evaluations of $F$. However, note that, compared to completely random injective functions on $S_\nodeN$, the function $F$ has an additional structure. For example, $F(\tau,F(\tau,\mathcal{E}(G)))=\mathcal{E}(G)$ for any transposition $\tau$. To prove the desired lower bound, one would have to show that no algorithm can take advantage of this additional structure, and such task is beyond the scope of the present work.\\


The {\sc{Set Equality}} problem can be reduced to {\sc{Index Erasure}}.  Let us describe two natural reductions, one that requires {\sc{Index Erasure}} to be coherent and one that permits it to be non-coherent. Here we assume that we are given oracle access to two injective functions $f_0,f_1\colon[n]\rightarrow[m]$.

In the first reduction, one can prepares state
$
( \ket{0} \ket{\psi_{f_0}} \ket{t_0} + \ket{1}\ket{\psi_{f_1}}\ket{t_1})/\sqrt{2} 
$
such that
$
|\psi_{f}\rangle :=  \sum_{x=1}^n |f(x)\rangle / \sqrt{n}
$
using an {\sc{Index Erasure}} algorithm, and then measures the first qubit in its Fourier basis $\{\ket{+},\ket{-}\}$, where $\ket{\pm}=(\ket{0}\pm\ket{1})/\sqrt{2}$. For this reduction to work, the temporary registers must be left in a default state $\ket{t_0}= \ket{t_1} = \ket{\mathbf{0}}$, which requires index erasure to be coherent. 

In the second reduction, one prepares the state 
$\ket{S_n(G_0)} \ket{t_0} \ket{S_n(G_1)}\ket{t_1}$
using two runs of an {\sc{Index Erasure}} algorithm, and then performs the \textsf{SWAP} test on the registers containing $\ket{S_n(G_0)}$ and $\ket{S_n(G_1)}$. Unlike the reduction before, this reduction does not require {\sc{Index Erasure}} to be coherent. That is, the states  $\ket{t_0},\ket{t_1}$ of the ``garbage'' qubits are inconsequential, and the algorithm does not have to clean them up.

Ambainis et al.~\cite{AMRR11} showed that the quantum query complexity of \emph{coherent} {\sc{}Index Erasure} is $\Theta(\sqrt{n})$, therefore strictly separating complexities of {\sc{}Set Equality} and coherent {\sc{}Index Erasure}. Before the present work, there was still a possibility that non-coherent {\sc{}Index Erasure} might be as fast as {\sc{}Set Equality}, but we prove that it is not the case, strictly separating complexities of {\sc{}Set Equality} and non-coherent {\sc{}Index Erasure} as well.

\subsection{Organization of the paper }

The paper is organized as follows. In Section~\ref{sec:QuantumModel}, we present preliminaries on the quantum query model, with emphasis on state generation problems, including \textsc{Index Erasure}, the general adversary method, and the automorphism principle. In Section~\ref{sec:prelims}, we present preliminaries on the representation theory, particularly focusing on the symmetric group and its action on injections. The automorphism principle of the general adversary method requires us to analyze highly symmetric matrices, which are elements of the Bose--Mesner algebra corresponding to the injection scheme. In Section \ref{sec:assoc}, we formally define this association scheme, establishing the labeling of its various parameters and computing some of them, as well as addressing its connection to the Johnson scheme. With this formalism at our disposal, in Section~\ref{sec:AdvToKrein}, we show that the proof for the $\Omega(\sqrt{n})$ lower bound on the quantum query complexity of the non-coherent \textsc{Index Erasure} can be reduced to showing upper bounds on certain Krein parameters of the injection scheme. Finally, we place the required bounds on these Krein parameters in Section~\ref{sec:MainProof}.

\section{Quantum state generation} \label{sec:QuantumModel}

In this paper, we address limitations of quantum query algorithms for solving the \textsc{Index Erasure} problem. We assume that the reader is familiar with foundations of quantum computing (see~\cite{nielsen2000quantum} for an introductory reference), some of which we review here. The basic memory unit of a quantum computer is a qubit, which is a two-dimensional complex Euclidean space $\C[\{0,1\}]$ having \emph{computational} orthonormal basis $\{\ket{0},\ket{1}\}$. Similarly, a $k$-qubit system corresponds to Euclidean space $\C[\{0,1\}^k]$  {with computational basis $\{\ket{b}\colon b\in\{0,1\}^k\}$}. Unit vectors $\ket{\Psi}\in\C[\{0,1\}^k]$ are called (pure) \emph{quantum states} and they represent superpositions over  {various computational basis states}.

Quantum bits are often grouped together in \emph{registers} for the ease of algorithm design and analysis. If $\ket{\psi},\ket{\phi}$ are states of two registers, then the state of the joint system is $\ket{\psi}\otimes\ket{\phi}$. We often shorten the notation $\ket{\psi}\otimes\ket{\phi}$ to $\ket{\psi}\ket{\phi}$ or $\ket{\psi,\phi}$. Due to \emph{entanglement}, it is not always the case that the state of the joint system can be written as a tensor product of states of the individual registers.

Quantum information is processed by unitary transformations, which correspond to square matrices $U$ such that $UU^*=U^*U=I$, and they map quantum states to quantum states. This unitary processing of quantum information implies that any (noiseless) quantum computation is reversible.

\subsection{Quantum query model}

In the oracle model, we are given an access to a black-box oracle $O_f$ that evaluates some unknown function $f\colon[n]\rightarrow[m]$. The goal of a query algorithm is to perform some computational task that depends on $f$, for example, to compute some function of $f$, such as $\textsc{Parity}(f):=f(1)\oplus f(2)\oplus\cdots\oplus f(n)$ when $m=2$. In quantum computing, one can query the oracle in superposition. On the other hand, due to the requirement for reversibility, the oracle is typically designed so that it preserves the input query $x$. Namely, given $\ket{x,y}$ as an input, the oracle $O_f$ outputs $\ket{x,y\oplus f(x)}$ (see Figure~\ref{fig:oracle}). Here and below we may assume $x,y,f(x)$ to be represented in binary. Even if $f$ is injective---as it is for \textsc{Index Erasure}---unless one knows how to compute the inverse of $f$, implementing $\ket{x}\mapsto\ket{f(x)}$ in practice might be much harder than $\ket{x,y}\mapsto\ket{x,y\oplus f(x)}$.

\begin{figure}[h!]
	\centering
\begin{tikzcd}
\lstick{$\ket{x}_{\regI}$\hspace{3.4pt}} & \gate[wires=2]{O_f} & \qw\rstick{$\ket{x}_{\regI}$} \\
\lstick{$\ket{y}_{\regO}$} & & \qw\rstick{$\ket{y\oplus f(x)}_{\regO}$}
\end{tikzcd}
\label{fig:oracle}
\caption{A schematic of a quantum oracle $O_f$. We assume that $y$ and $f(x)$ are encoded in binary, and thus $O_f$ is its own inverse.}\label{fig:oracle}
\end{figure}

A quantum query algorithm with oracle $O_f$ consists of
\begin{itemize}
\item four registers: input and output registers $\regI$ and $\regO$ for accessing the black-box function $f$, the target register $\regT$ for storing the result of the computation, and a workspace register $\regW$;
\item an indexed sequence of unitary transformations $U_0,U_1,\cdots,U_Q$ acting on those four  registers.
\end{itemize}
The quantum query algorithm starts its computation in state $\ket{\mathbf{0}}:=\ket{00\cdots 0}$, and then performs $2Q+1$ unitary operations, alternating between $U_i$, which acts on all the registers, and $O_f$, which acts on registers $\regI\regO$. Thus the final state of the computation is
\[
\ket{\Psi_f}:=U_Q(O_f\otimes I_{\regT\regW}) U_{Q-1}(O_f\otimes I_{\regT\regW})\cdots U_1(O_f\otimes I_{\regT\regW})U_0\ket{\mathbf{0}},
\]
where $I_{\regT\regW}$ is the identity operator on registers $\regT\regW$.
Figure~\ref{fig:1} gives a schematic of a quantum query algorithm.
Note that $Q$ is the number of oracle queries performed by the algorithm, and we also refer to it as the \emph{query complexity of the algorithm}.
\begin{center}
\begin{figure}[h!]
	\centering
\begin{tikzcd}
\lstick{$\ket{\mathbf{0}}_{\regT}$\hspace{2.4pt}} & \gate[wires=4]{U_0} & \qw & \gate[wires=4]{U_1} &\ \ldots\ \qw & \qw & \gate[wires=4]{U_Q} & \qw \slice[style = {line width=0.5pt,dashed,gray!50!black,align=right}]{\ket{\Psi_f}} & \qw  \rstick{\hspace{6pt}$\approx\ket{\psi_f}$}\\ %
\lstick{$\ket{\mathbf{0}}_{\regI}$\hspace{5.8pt}} & & \gate[wires=2]{O_f} & &\ \ldots\ \qw & \gate[wires=2]{O_f} & &\qw & \qw \rstick[wires=3]{$\approx\ket{t_f}$} \\
\lstick{$\ket{\mathbf{0}}_{\regO}\hspace{2pt}$} & & & &\ \ldots\ \qw & & &\qw & \qw \\
\lstick{$\ket{\mathbf{0}}_{\regW}$} & & \qw & &\ \ldots\ \qw & \qw & &\qw & \qw
\end{tikzcd}
\caption{A schematic of a quantum algorithm that uses an oracle $O_f$. The registers labeled $\regT,\regI,\regO,\regW$ are, respectively, the target, input, output, and workspace registers of the algorithm. 
The target register of the final state $\ket{\Psi_f}$ of the algorithm should be in a state close to the target state $\ket{\psi_f}$.
 }
\label{fig:1}
\end{figure}
\end{center}
In this paper we are interested in quantum query algorithms whose goal is to generate a specific $f$-dependent state $\ket{\psi_f}$ by accessing $f$ via $O_f$. We note that this generalizes classical function evaluation by a quantum algorithm, where each $\ket{\psi_f}$ is asked to be a computational basis vector. In the next section we describe two distinct regimes of quantum state generation, as well as why they are exactly the same for classical function evaluation.
 
\subsection{Coherent vs. Non-coherent State Generation}

When we talk about quantum state generation with oracle $O_f$, we implicitly assume the domain $[n]$ and the range $[m]$ of $f$ to be fixed.
A \emph{quantum state generation} problem is thus specified by a subset $\Dom$ of functions in form $f\colon[n]\rightarrow[m]$, which we call the \emph{domain of the problem}, a complex Euclidean space called the \emph{target space}, and, for every $f\in\Dom$, a quantum state $\ket{\psi_f}$ in the target space called the \emph{target state}.

One may consider quantum state generation in two regimes: coherent and non-coherent.
In the \emph{coherent} state generation regime, all the computational memory other than the target register (i.e., registers $\regI\regO\regW$) must be returned to its initial state $\ket{\mathbf{0}}$. Therefore, if one was running an algorithm for a superposition of oracles, the final quantum state would be a superposition of the target states. In contrast, for \emph{non-coherent} state generation, one does not place any requirements on the ancillary memory. More precisely, in the coherent case, for every input $f\in\Dom$ we require that the final state $|\Psi_f\>$ satisfies
\[ \Re\<\psi_f,\mathbf{0}|\Psi_f\> \ge \sqrt{1-\epsilon}, \]
where $|\mathbf{0}\>$ is the initial state of the ancillary registers and a constant $\epsilon\ge 0$ is the desired precision~\cite{LMRSS11}.
We call the minimum among quantum query complexities among quantum query algorithms that achieve this task the \emph{($\epsilon$-error) quantum query complexity of the coherent version of the problem}.
On the other hand, in the non-coherent case, the final state $|\Psi_f\>$ has to satisfy
\[
\left\|(\<\psi_f|\otimes I)|\Psi_f\>\right\|
= \max_{|t_f\>} \Re \<\psi_f,t_f|\Psi_f\>
\ge \sqrt{1-\epsilon},
\]
where the maximum is over unit vectors $|t_f\>$ on the system of registers $\regI\regO\regW$~\cite{LMRSS11}, and we analogously define the quantum query complexity of the non-coherent version of the problem.

It is worth noting that evaluation of classical functions can be considered as a special case of quantum state generation, where one is asked to prepare the computational basis state $|\psi_f\>$. Since quantum mechanics permits cloning of orthogonal states (computational basis states, in this case), there is no difference between coherent and non-coherent function evaluation, if one is willing to tolerate a two-fold increase in query complexity: at the end of a non-coherent computation, one can copy the target register into an additional register, and then run the whole computation in reverse, restoring all but this additional register to their initial state. 

Finally, note that an algorithm for a coherent case of a problem solves its non-coherent case as well. Conversely, a lower bound on the non-coherent version of the problem is a lower bound on the coherent version as well.

\subsection{Index Erasure}

Throughout this work, we let $n$ and $m$ be positive integers such that $n \leq m$. The domain of \textsc{Index Erasure} is the set of all injective functions $f\colon[n]\rightarrow[m]$. These functions are in one-to-one correspondence with $n$-partial permutations of $[m]$ and thus $|\Dom|=\ffac{m}{n}:=m!/(m-n)!$.
\textsc{Index Erasure} is the task of preparing the quantum state that is the uniform superposition
\[  \ket{\psi_f}:=\frac{1}{\sqrt{n}} \sum_{x=1}^n |f(x)\rangle \]
over the image of $f$.
Note that the state
\[  \frac{1}{\sqrt{n}} \sum_{x=1}^n | x \rangle |f(x)\rangle \]
can be prepared using a single query to $O_f$. This would give us the superposition that we seek if we could only ignore or \emph{erase} the first register that records the \emph{index} $x$, which gives the problem its namesake.

The question of the complexity of \textsc{Index Erasure} was first raised by Shi~\cite{Shi02}.
As for the upper bound, there is a simple quantum query algorithm for coherent \textsc{Index Erasure} given access to $O_f$.  Thinking of the injective function $f$ as a database with entries in $[m]$, for any $y$ in the image of $f$ we may use Grover's algorithm with $O_f$ to find the unique index $x$ of $f$ such that $f(x) = y$.  In other words, there is a circuit that sends the superposition
\[  \frac{1}{\sqrt{n}} \sum_{x=1}^n |f(x)\rangle \quad \text{ to } \quad \frac{1}{\sqrt{n}} \sum_{x=1}^n  |x\rangle |f(x)\rangle. \]
Inverting this circuit effectively ``erases" the index register, which implies that the quantum query complexity of \textsc{Index Erasure} is $\OO(\sqrt{n})$.

The first non-trivial lower bounds on the quantum query complexity of \textsc{Index Erasure} were obtained via the \textsc{Set Equality} problem, which asks to decide whether two injective functions $f,f'$ given via black-box oracles $O_f,O_{f'}$ have the same image or have disjoint images, given a promise that either is the case. \textsc{Set Equality} can be easily reduced to non-coherent (and, thus, coherent too) \textsc{Index Erasure} via the swap test, increasing the number of oracle queries by at most a constant factor. Thus, when Midrij\=anis presented an $\Omega((n/\log n)^{1/5})$ lower bound on the quantum query complexity of \textsc{Set Equality}~\cite{Mid04}, the same lower bound automatically applied to \textsc{Index Erasure}. Ambainis, Magnin, Roetteler, and Roland devised the \emph{hybrid adversary method} \cite{AMRR11}, which they used to prove a tight $\Omega(\sqrt{n})$ lower bound for \textsc{Index Erasure} in the \emph{coherent} regime, and left the non-coherent case as an open question. Later, the lower bound for \textsc{Set Equality} was improved to a tight $\Omega(n^{1/3})$ \cite{Zha15,BR18}, which in turn led to an improved query lower bound for the non-coherent \textsc{Index Erasure}.

The focus of this work is to prove a tight lower bound on the quantum query complexity of \textsc{Index Erasure} in the \emph{non-coherent} case. To show this, we use the so-called \emph{general adversary method}~\cite{LMRSS11} which we review in Section~\ref{genAdv}.

\section{General Adversary Method}\label{genAdv}

The general adversary method places optimal lower bounds on the quantum query complexity of any state conversion problem \cite{LMRSS11}. State conversion problems generalize state generation problems, yet in this paper it will suffice to introduce the adversary bound only for the latter.

The general adversary bound is stated via the $\gamma_2$ and filtered $\gamma_2$ norms,
which are defined as follows. Let $M$ be any matrix and let $\Delta=\{\Delta_x\colon x\in[n]\}$ be a family of matrices of the same dimensions as $M$. Define 
\begin{alignat*}{2}
&\gamma_2(M)&\,:=\,&\max_{\Gamma'}\{\|M\circ \Gamma'\|\colon\|\Gamma'\|\le 1\},\\
&\gamma_2(M|\Delta)&\,:=\,&\max_\Gamma\big\{\|M\circ \Gamma\|\colon\max_{x\in[n]}\|\Delta_x\circ \Gamma\|\le 1\big\},
\end{alignat*}
where $\circ$ denotes the Schur (i.e., entrywise) product of two matrices and, thus, $\Gamma$ and $\Gamma'$ are required to have the same dimensions as $M$.
One can show that $\gamma_2(\cdot)$ is a norm over the set of all matrices and $\gamma_2(\cdot|\Delta)$ is a norm over the set of matrices $M$ that has $M_{f,f'}=0$ whenever $(\Delta_x)_{f,f'}=0$ for all $x\in [n]$ (see \cite{LMRSS11} for details). 
The two norms are called the \emph{$\gamma_2$ norm} and the \emph{filtered $\gamma_2$ norm}, respectively.

The general adversary bound employs various real symmetric matrices whose rows and columns are labeled by black-box functions $f\in\Dom$ in the same order. The family of \emph{difference matrices} $\Delta$ is defined as follows. For each $x\in[n]$, the  $\Delta_x$ is a binary matrix such that $(\Delta_x)_{f,f'}:=1$ if and only if $f(x)\ne f'(x)$.
A \emph{state matrix} is any positive-semidefinite matrix $T$ such that $T\circ I=I$. In other words, it is a Gram matrix corresponding to some family of unit vectors. 
Note that $\gamma_2(\cdot|\Delta)$ is a norm on the set of matrices whose diagonals are all-zeros, and a difference of any two state matrices belongs to this set.

Let $\junkSet$ be the set of all state matrices.  {(In Section~\ref{sec:AdvToKrein}, we will narrow the definition of $\junkSet$ to contain only state matrices possessing certain symmetries.)}
 Note that $\junkSet$ is a compact set and it is closed under the Schur product.
Two particular state matrices of our interest are the all-ones matrix $J$, which corresponds to the family $\{\ket{\mathbf{0}}\colon f\in\Dom\}$, and the \emph{target matrix} $T^\odot$ defined as $(T^\odot)_{f,f'}:=\<\psi_f|\psi_{f'}\>$.

Theorem~\ref{thm:adv} is a special case of \cite[Theorem~4.9]{LMRSS11}.
\begin{thm}\label{thm:adv}
The $\epsilon$-error quantum query complexity of a non-coherent state generation problem with the target matrix $T^\odot$ and the family of difference matrices $\Delta$ is both
\[
\Omega\big(\Adv_{2\sqrt{2\epsilon}}\big)
\quad\text{and}\quad
\OO\big(\Adv_{\epsilon^4/16}\,\epsilon^{-2}\log\epsilon^{-1}\big),
\]
where
\begin{equation}
\label{eq:deltaAdv}
\Adv_\delta:=\min_{R,T\in\junkSet}\{\gamma_2(J-R|\Delta) \colon \gamma_2(R-T^\odot\circ T)\le \delta\}.
\end{equation}
In the case of coherent state generation, one imposes $T=J$ in the expression for $\Adv_\delta$.
\end{thm}

In the expression for $\Adv_\delta$, the state matrix $T$ essentially corresponds to the ancillary states that are prepared in addition to the target states. Thus, assuming there were no error, $T^\odot\circ T $ would be the Gram matrix corresponding to the final states of the whole system. However, since one allows some error---determined by the parameter $\delta$---it suffices that the state matrix $R$ corresponding \emph{exactly} to the final states of the algorithm is close to $T^\odot\circ T $.

When applying the adversary bound, it is convenient to actually apply it to the zero-error case therefore eliminating the matrix $R$ from the consideration. In particular, this leads to the following corollary of Theorem~\ref{thm:adv}. 

A symmetric matrix $\Gamma$ that satisfies $\|\Delta_x\circ \Gamma\|\le 1$ for all $x$ is called an \emph{adversary matrix}.
Let $\Pi_\Gamma$ denote the orthogonal projector on the image of $\Gamma$.

\begin{cor}
\label{cor:adv}
Let $\Gamma$ be an adversary matrix for a non-coherent state generation problem with the target matrix $T^\odot$ and the family of difference matrices $\Delta$, let $\omega$ be a principal eigenvector of $\Gamma$ of norm $1$, and let 
\[
\eta':=\max_{T\in\junkSet} \,\omega^\top(T^\odot \circ T \circ \Gamma/\|\Gamma\|)\,\omega.
\]
The $\epsilon$-error quantum query complexity of the problem is
\[
\Omega\big((1-\eta'-2\sqrt{2\epsilon})\,\|\Gamma\|\,\big).
\]
If $\omega$ is a uniform superposition over $\Dom$, then $\eta'\le\eta$ for
\[
\eta := \max_{T\in\junkSet} \Tr{\Pi_\Gamma(T^\odot \circ T)/|\Dom|}.
\]
\end{cor}

\begin{proof}
For the first part of the corollary, suppose $R,T\in\junkSet$ satisfy $\gamma_2(R-T^\odot \circ T)\le 2\sqrt{2\epsilon}$ and are thus a feasible solution to the minimization in $\Adv_{2\sqrt{2\epsilon}}$. We have
\begin{align*}
\gamma_2(J-R|\Delta) \ge & \|(J-R)\circ\Gamma\| \\
\ge & \|(J-T^\odot\circ T)\circ\Gamma\| - \|\Gamma\|\big\|(R-T^\odot\circ T)\circ\Gamma/\|\Gamma\|\big\| \\
\ge & \,\omega^\top\Gamma\,\omega - \omega^\top(T^\odot \circ T \circ \Gamma)\, \omega - 2\sqrt{2\epsilon}\|\Gamma\| \\
\ge & (1-\eta'-2\sqrt{2\epsilon})\|\Gamma\|.
\end{align*}
For the second part, note that, if $\omega$ is a uniform superposition over $\Dom$, then, for any two symmetric $|\Dom|\times|\Dom|$ matrices $M,M'$, we have $\omega^\top(M\circ M')\,\omega=\Tr{MM'}/|\Dom|$.
The inequality $\eta'\le \eta$ results from both $T^\odot\circ T$ and $\Pi_\Gamma-\Gamma/\|\Gamma\|$ being positive-semidefinite.
\end{proof}

\subsection{Automorphism Principle for State Generation} \label{sec:invar}

The \emph{automorphism principle} of \cite{HLS07} addresses the adversary bound for function evaluation problems and states that, without loss of generality, the optimal adversary matrix can be required to respect symmetries of the problem. The main result of this section is Theorem~\ref{thm:auto},  {which is a generalization of the automorphism principle} to state generation problems. It is not difficult to see that the proof of Theorem~\ref{thm:auto} can be generalized further to state conversion problems \emph{mutatis mutandis}; however, since the current application to \textsc{Index-Erasure} is a state generation problem, we have elected not to prove Theorem~\ref{thm:auto} in this generality.

The wreath product $S_m\wr S_n$ of groups $S_m$ and $S_n$ is the group whose elements are
$(\pi,\bm{\sigma})\in S_n\times S_m^n$ and whose group operation is
\[
\big(\pi',(\sigma_1',\cdots,\sigma_n')\big)
\big(\pi,(\sigma_1,\cdots,\sigma_n)\big)
=
\big(\pi'\pi,(\sigma_1'\sigma_{(\pi')^{-1}(1)},\cdots,\sigma_n'\sigma_{(\pi')^{-1}(n)})\big)
\]
(see~\cite[Ch. 4]{JamesKerber}). Similarly to (\ref{eq:SnSmAction}) below, the action of $S_m\wr S_n$ on
$f\colon[n]\rightarrow[m]$ is given by
\begin{equation}\label{eq:WreathAction}
\big((\pi,\bm{\sigma})f\big)(x) = \sigma_x(f(\pi^{-1}(x))\qquad\text{for all }x\in[n].
\end{equation}
The action of a subgroup $G\le S_m\wr S_n$ on the set of black-box functions $\Dom$ is \emph{closed} if $g(f)\in\Dom$ for all $f\in\Dom$ and $g\in G$.

Suppose $M$ is a symmetric $|\Dom|\times|\Dom|$ matrix whose rows and columns are labeled by $f\in\Dom$ in the same order and suppose the 
action of a subgroup $G\le S_m\wr S_n$ on $\Dom$ is closed. We say that $M$ is \emph{$G$-invariant} if $M_{g(f),g(f')}=M_{f,f'}$ for all $f,f'\in\Dom$  and $g\in G$. Similarly, a vector $\omega\in\mathbb{C}[\Dom]$ is $G$-invariant if $\omega_{g(f)}=\omega_f$ for all $f\in\Dom$  and $g\in G$.
A subgroup $G$ is an \emph{automorphism group} for a state generation problem with a target matrix $T^\odot$ if  $G$'s action on $\Dom$ is closed and $T^\odot$ is $G$-invariant.%
\footnote{The $G$-invariance of $T^\odot$ is equivalent to the existence of a unitary representation $U_g$ of $G$ acting on the target space such that $U_g|\psi_f\>=|\psi_{g(f)}\>$ for all $f\in\Dom$ and $g\in G$.}

Note that the free product of two automorphism groups is an automorphism group, so one can consider the maximum automorphism group of a problem. For example, the maximum automorphism group of \textsc{Parity} is the whole wreath product $S_2\wr S_n$ while the maximum automorphism groups of \textsc{Or} and \textsc{Index Erasure} are, respectively, 
\begin{align*}
& \{(\pi,(\varepsilon,\cdots,\varepsilon))\colon \pi\in S_n \} \cong S_n,\\
& \{(\pi,(\sigma,\cdots,\sigma))\colon \pi\in S_n \text{ and }\sigma\in S_m\} \cong S_n\times S_m,
\end{align*}
where $\varepsilon$ is the identity permutation in $S_2$.
 {Note that for \textsc{Parity}, the target states $|\psi_{f}\>$ and $|\psi_{g(f)}\>$ may differ for $g$ in the maximum automorphism group, and the same is true for  \textsc{Index Erasure}.}

\begin{thm}
\label{thm:auto}
Let $G$ be an automorphism group for a non-coherent state generation problem. The value of $\Adv_\delta$ remains the same if one restricts the minimization in the expression defining $\Adv_\delta$ and the maximization in the expressions defining the $\gamma_2$ and filtered $\gamma_2$ norms to $R,T,\Gamma,\Gamma'$ that are all $G$-invariant. 
\end{thm}

The proof of Theorem~\ref{thm:auto} splits into two parts according to the two types of symmetrizations of the matrices $R,T,\Gamma,\Gamma'$, which depend on whether they are arguments in the aforementioned minimization or maximization.
\begin{proof}[Proof of Theorem~\ref{thm:auto}]
	Let $M$ be a generic symmetric matrix whose rows and columns are labeled by black-box functions $f\in\Dom$ in the same order.
	Let $g(M)$ be obtained by permuting the rows and the columns of $M$ according to the action of $g\in G$ of $\Dom$ (see (\ref{eq:WreathAction})). Namely, entrywise we define $g(M)$ as
	\[
	(g(M))_{f,f'} := M_{g^{-1}(f),g^{-1}(f')}.
	\]
	Similarly, for a vector $\omega\in\mathbb{C}[\Dom]$, define $g(\omega)$ entrywise as
	\(
	(g(\omega))_{f'} := \omega_{g^{-1}(f)}.
	\)
	For the sake of conciseness, we also occasionally write $M^g$ and $\omega^g$ instead of $g(M)$ and $g(\omega)$, respectively.
	Note that $M$ is $G$-invariant if $M^g=M$ for all $g\in G$, and $T^\odot,I,J$ are $G$-invariant.
	Also note that $(M\circ M')^g=M^g\circ M'^g$.
	
	Let $\Delta=\{\Delta_1,\cdots,\Delta_n\}$ be the family of difference matrices.
	This family is closed under the action of $G$ is the following sense.
	\begin{clm}\label{claim:DeltaClosed}
		We have $(\pi,\bm{\sigma})(\Delta_x)=\Delta_{\pi(x)}$ for all $(\pi,\bm{\sigma})\in G$.
	\end{clm}
	\begin{proof}
		Fix $(\pi,\bm{\sigma})\in G$ and let $g:=(\pi,\bm{\sigma})^{-1}$. Note that $g=(\pi^{-1},\bm{\sigma}')$ for some $\bm{\sigma}'\in S_m^n$.
		From (\ref{eq:WreathAction}), we have 
		\(
		(g(f))(x) = (g(f'))(x)
		\)
		if and only if
		\(
		f(\pi(x)) = f'(\pi(x)).
		\)
		As a result, we have
		\[
		((\pi,\bm{\sigma})(\Delta_x))_{f,f'} = (\Delta_x)_{g(f),g(f')} = 1
		\]
		if and only if 
		$f(\pi(x))=f'(\pi(x))$.
	\end{proof}
	\noindent Note that $M^g$ equals $M$ with its rows and columns permuted. Permuting rows and columns does not affect the $\gamma_2$ norm, so we have $\gamma_2(M^g)=\gamma_2(M)$ for all $g\in G$. And, if the diagonal of $M$ is all-zeros, then Claim~\ref{claim:DeltaClosed} also implies that $\gamma_2(M^g|\Delta) = \gamma_2(M|\Delta)$ for all $g\in G$.

	\begin{clm}
		Restricting $R,T\in\junkSet$ to be $G$-invariant does not change the optimal value of the minimization problem defining $\Adv_\delta$.
	\end{clm}

	\begin{proof}
		Let $R,T$ be an optimal solution of the minimization in (\ref{eq:deltaAdv}). We define their respective $G$-symmetrizations as
		\begin{equation*}
		\overline{R}:=\frac{1}{|G|}\sum_{g\in G} g(R)
		\qquad\text{and}\qquad
		\overline{T}:=\frac{1}{|G|}\sum_{g\in G} g(T),
		\end{equation*}
		which are both clearly in $\junkSet$. 
		Since $g(T^\odot)=T^\odot$ for all $g\in G$, the triangle inequality yields
		\begin{multline*}
		\gamma_2(\overline{R}-T^\odot\circ\overline{T})
		= \gamma_2\bigg(\frac{1}{|G|}\sum_{g\in G}g(R-T^\odot\circ T)\bigg) 
		\le \frac{1}{|G|}\sum_{g\in G} \gamma_2\big(g(R-T^\odot\circ T)\big) \\
		= \frac{1}{|G|}\sum_{g\in G} \gamma_2(R-T^\odot\circ T)
		= \gamma_2(R-T^\odot\circ T) \le \delta.
		\end{multline*}
		Hence we have show that the pair $\overline{R},\overline{T}$ is a feasible solution to the minimization in (\ref{eq:deltaAdv}), and it remains to show that it is also optimal.
		And, again by the triangle inequality,
		\begin{multline*}
		\gamma_2(J-\overline{R}|\Delta)
		= \gamma_2\bigg(\frac{1}{|G|}\sum_{g\in G}g(J-R)\bigg|\Delta\bigg)
		\le \frac{1}{|G|}\sum_{g\in G} \gamma_2\big(g(J-R)|\Delta\big)
		\\
		= \frac{1}{|G|}\sum_{g\in G} \gamma_2(J-R|\Delta)
		= \gamma_2(J-R|\Delta)=\Adv_\delta.
		\end{multline*}
	\end{proof}
	
	Now, fix $G$-invariant $R,T\in\junkSet$ and let $M:=J-R$, which is also $G$-invariant. 
	Let us now show that the maximization in 
	\[
	\gamma_2(M|\Delta) = \max_\Gamma \big\{\|M\circ \Gamma\|\colon \forall x\; \|\Delta_x\circ \Gamma\|\le 1\big\}
	\]
	can be restricted to $G$-invariant $\Gamma$.
	
	The proof now proceeds along the lines of the automorphism principle in~\cite{LMRSS11}.
	Fix an optimal solution $\Gamma$, and without loss of generality assume that the largest eigenvalue of $M\circ \Gamma$ is positive and let it correspond to an eigenvector $\omega\in\mathbb{C}[\Dom]$ of norm $1$. Namely, 
	\[
	\|M\circ\Gamma\| = \omega^\top\!(M\circ \Gamma)\omega.
	\]
	Define the $G$-symmetrization $\overline\omega$ of $\omega$ entrywise as
	\[
	\overline\omega_f := \sqrt{\frac{1}{|G|}\sum_{g\in G}|(\omega^g)_f|^2},
	\]
	and note that $\overline\omega$ also has norm $1$.
	Without loss of generality, all the entries of $\overline\omega$ are strictly positive (the rows and columns corresponding to $f$ such that $\overline\omega_f=0$ can be removed from the consideration), and thus we can entrywise define a vector $\mu$ as
	$\mu_f:=1/{\overline\omega_f}$.
	Let us define
	\[
	\overline{\Gamma}:= \mu\mu^\top\circ \frac{1}{|G|}\sum_{g\in G} \Gamma^g\circ \omega^g\omega^{g\top},
	\]
	which is clearly $G$-invariant.
	
	Let us start by showing that $\|\overline\Gamma\circ\Delta_x\|\le 1$ for all $x$. Note that
	$\|\Gamma^g\circ \Delta_x\|\le1$ for all $x$ and all $g\in G$ due to Claim~\ref{claim:DeltaClosed},
	$\|\Gamma\circ\Delta_x\|\le 1$ if and only if $I\pm \Gamma\circ \Delta_x $ is positive-semidefinite,
	and
	\[
	I\circ \mu\mu^\top\circ \frac{1}{|G|}\sum_{g\in G}  \omega^g\omega^{g\top}=I.
	\]
	We thus have that
	\begin{align*}
	I\pm \overline\Gamma\circ\Delta_x
	= &\,
	\mu\mu^\top\circ \frac{1}{|G|}\Big(
	\sum_{g\in G} \omega^g\omega^{g\top} \circ (I\pm\Gamma^g\circ \Delta_x)
	\Big)
	\end{align*}
	is positive-semidefinite as the sum and the entrywise product of positive-semidefinite matrices are positive-semidefinite.
	Thus, indeed, $\|\overline\Gamma\circ\Delta_x\|\le 1$ for all $x$.
	
	Now let us use the fact that $\omega$ is a principal eigenvector of $M\circ\Gamma$, and, therefore, $\omega^g$ is a principal eigenvector of $M\circ\Gamma^g$ for all $g\in G$ (recall that $M$ is $G$-invariant).
	We have
	\begin{align*}
	\|M\circ\overline{\Gamma}\| \ge \overline{\omega}\,(M\circ\bar\Gamma)\,\overline{\omega}^\top
	= &\, \sum_{f,f'\in\Dom} \Big(\frac{1}{|G|}\sum_{g\in G}(M\circ\Gamma^g\circ \omega^g\omega^{g\top})\Big)_{f,f'}
	\\ = &\,
	\frac{1}{|G|}\sum_{g\in G}\omega^{g\top}(M\circ\Gamma^g) \omega^g
	= \|M\circ\Gamma\|.
	\end{align*}
	Thus $\overline{\Gamma}$ is also an optimal solution of the maximization above. Also note that $\overline\omega$ is the principal eigenvector of $M\circ\overline\Gamma$.
	
	A similar argument shows that, for $G$-invariant $M':=R-T^\odot\circ T$, one can restrict the maximization in
	\[ 
	\gamma_2(M') = \max_{\Gamma'} \big\{\|M'\circ \Gamma'\|\colon \|\Gamma'\|\le 1\big\}
	\]
	to $G$-invariant $\Gamma'$. This completes the proof of Theorem~\ref{thm:auto}.
\end{proof}

Note that the ability to restrict $T$ and $\Gamma$ to be $G$-invariant carries over from Theorem~\ref{thm:adv} to Corollary~\ref{cor:adv}. The ability to restrict $T$ will be paramount in our proof (see Section~\ref{sec:AdvToKrein}). On the other hand, the ability to restrict $\Gamma$ is optional. Namely, Corollary~\ref{cor:adv} provides an adversary bound regardless of what restrictions one imposes on $\Gamma$, yet for too strict restrictions this bound would not be optimal.

As observed in \cite{AMRR11}, the set of $|\Dom| \times |\Dom|$ matrices indexed by $\Dom$ that are $(S_n \times S_m)$-invariant under the aforementioned action (\ref{eq:WreathAction}) afford a commutative matrix algebra. In particular, it is the Bose--Mesner algebra of a symmetric association scheme defined over injections, which we formally define in Section~\ref{sec:assoc}. Before we define this association scheme, some results from the representation theory of the symmetric group are needed, which we overview in the next section.

\section{Representation Theory Preliminaries}\label{sec:prelims}

We refer the reader to~\cite{Diaconis88} for an introduction to group representation theory, \cite{Sagan} for more details on the representation theory of the symmetric group, and~\cite{CST} for a more involved discussion on finite Gelfand pairs and their spherical functions.

Let $\text{Sym}(X)$ denote the \emph{symmetric group} on the symbol set $X$. If $X = [m] := \{1,2,\cdots,m\}$, then we define $S_m := \text{Sym}(X)$. It is well-known that the conjugacy classes of $S_m$ and irreducible representations (irreps) of $S_m$ are given by the cycle-types of permutations of $S_m$, which in turn are in one-to-one correspondence with \emph{integer partitions} $\lambda \vdash m$, i.e., $\lambda := (\lambda_1, \lambda_2, \cdots , \lambda_k) \vdash m \text{ such that } \lambda_1 \geq \lambda_2 \geq \cdots \geq \lambda_k \geq 0\text{ and } \sum_{i=1}^k \lambda_i = m.$  
We may visualize $\lambda$ as a \emph{Young diagram}, a left-justified table of cells that contains $\lambda_i$ cells in the $i$th row. When referencing a Young diagram, we alias $\lambda$ as the \emph{shape}.  A \emph{standard Young tableau} of shape $\lambda \vdash n$ is a Young diagram with unique entries from $[n]$ that are strictly increasing along rows and strictly increasing along columns.
For example, the left Young diagram below has shape $(5,3,2,1) \vdash 11$ and the tableau on the right is a standard Young tableau of the same shape
\[
	\ytableausetup{mathmode,boxsize=1.0em}
	\begin{ytableau}
		\empty  & \empty &  \empty & \empty & \empty \\
		\empty & \empty & \empty \\ 
		\empty & \empty  \\ 
		\empty \\  
	\end{ytableau}
\quad \quad \quad \quad \quad \quad \quad \quad \quad 
	\ytableausetup{mathmode,boxsize=1.2em}
	\begin{ytableau}
		1 & 2 & 5 & 8 & 9 \\
		3 & 6 & 7\\ 
		4 & 10  \\ 
		11 \\  
	\end{ytableau}~.
\]
Let $\cV_\lambda$ denote the $S_m$-irrep corresponding to $\lambda \vdash m$. Let $d_\lambda$ be the number of standard Young tableau of shape $\lambda$. It is well-known that $d_\lambda$ is also the dimension of the $S_m$-irrep corresponding to $\lambda \vdash m$. The number of standard Young tableau can be counted elegantly via the \emph{hook rule} (see~\cite{Sagan} for a proof).
\begin{thm}[Hook rule]
	Let $\lambda \vdash m$, and for any cell $c \in \lambda$ of the Young diagram of $\lambda$ define the \emph{hook-length} $\sh_\lambda(c)$ to be the total number of cells below $c$ in the same column and to the right of $c$ in the same row, plus 1.  Then we have 
	$d_\lambda = {m!}/{\hooks{\lambda}}  \text{ where }\hooks{\lambda}:=\prod_{c \in \lambda} \sh_\lambda(c).$
\end{thm}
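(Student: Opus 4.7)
The plan is to prove the Hook Rule by induction on $m = |\lambda|$, with the Greene--Nijenhuis--Wilf hook walk providing the combinatorial identity that makes the induction go through. First I would establish the basic recursion
\begin{equation*}
d_\lambda = \sum_{c} d_{\lambda \setminus c},
\end{equation*}
where the sum runs over removable corners $c$ of $\lambda$ (cells whose removal leaves a valid Young diagram): in any standard Young tableau of shape $\lambda$, the cell containing the largest entry $m$ must be a removable corner, and erasing it gives a bijection onto standard Young tableaux of shape $\lambda \setminus c$. Combined with the inductive hypothesis $d_{\lambda \setminus c} = (m-1)!/\hooks{\lambda \setminus c}$, the target identity $d_\lambda = m!/\hooks{\lambda}$ reduces to
\begin{equation*}
\sum_{c} p_c = 1, \qquad p_c := \frac{\hooks{\lambda}}{m \cdot \hooks{\lambda \setminus c}}.
\end{equation*}
So it suffices to realize $\{p_c\}_c$ as an honest probability distribution on the set of removable corners of $\lambda$.

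The natural candidate is the \emph{hook walk}. Starting from a uniformly random cell $(i_0,j_0)$ of $\lambda$, at each step jump from the current cell $(i,j)$ to a uniformly chosen cell of its strict hook (the cells strictly to the right in row $i$ together with the cells strictly below in column $j$), terminating when the walk reaches a removable corner. The next step is to compute the probability that this walk ends at a fixed removable corner $c = (a,b)$ and show it equals $p_c$. The book-keeping is simplified by the observation that removing the cell $c = (a,b)$ from $\lambda$ decreases by exactly one the hook length of every cell strictly above $c$ in column $b$ and strictly left of $c$ in row $a$, while leaving every other hook length unchanged. This yields the factorization
\begin{equation*}
\frac{\hooks{\lambda}}{\hooks{\lambda \setminus c}} = \prod_{i < a} \frac{\sh_\lambda(i,b)}{\sh_\lambda(i,b) - 1} \cdot \prod_{j < b} \frac{\sh_\lambda(a,j)}{\sh_\lambda(a,j) - 1},
\end{equation*}
which is the form that the hook-walk calculation naturally produces.

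The main obstacle is evaluating the hook-walk termination probabilities and matching them to this product. The standard device is to parametrize a walk ending at $c$ by the sets $A \subseteq \{1, \ldots, a-1\}$ of rows and $B \subseteq \{1, \ldots, b-1\}$ of columns it visits en route, and to show by induction on $|A|+|B|$ that the probability of following such a trajectory is
\begin{equation*}
\frac{1}{m} \prod_{i \in A}\frac{1}{\sh_\lambda(i,b) - 1}\; \prod_{j \in B}\frac{1}{\sh_\lambda(a,j) - 1}.
\end{equation*}
Summing over all $A$ and $B$ factors as a product of two independent one-dimensional sums; each telescopes via the elementary identity $\sum_{A \subseteq S} \prod_{i \in A}(x_i - 1)^{-1} = \prod_{i \in S} x_i/(x_i - 1)$, and the result matches $p_c$ exactly. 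The base case $\lambda = (1)$ is immediate, so the induction closes and the formula $d_\lambda = m!/\hooks{\lambda}$ follows.
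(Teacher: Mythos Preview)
Your outline is the classical Greene--Nijenhuis--Wilf probabilistic proof, and it is correct. One small imprecision: a pair $(A,B)$ of visited rows and columns does not in general determine a single trajectory (two staircases through the same row- and column-sets are possible), so the displayed product
\[
\frac{1}{m}\prod_{i\in A}\frac{1}{\sh_\lambda(i,b)-1}\;\prod_{j\in B}\frac{1}{\sh_\lambda(a,j)-1}
\]
is the \emph{total} probability summed over all walks whose visited row-set is $A\cup\{a\}$ and column-set is $B\cup\{b\}$ (the starting cell being forced to be $(\min(A\cup\{a\}),\min(B\cup\{b\}))$). With that reading, the induction on $|A|+|B|$ and the final telescoping are exactly as you describe.

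As for comparison with the paper: the paper does not supply its own proof of the Hook Rule. The theorem is quoted as a standard tool from the representation theory of $S_m$, with a pointer to Sagan's textbook. Your argument is precisely the hook-walk proof that Sagan presents, so there is nothing to contrast---you have reconstructed the intended reference.
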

Another well-known result is the \emph{branching rule}, which describes how an $S_m$-irrep decomposes into $(S_{m-1})$-irreps (see~\cite{Sagan} for a proof). We say that a cell of a Young diagram is an \emph{inner corner} if it has no cells to its right and no cells below it.
\begin{thm}[The Branching Rule]
	If $\cV_\lambda$ is  {an} $S_m$-irrep, then $\cV_\lambda \cong \bigoplus_{\lambda^-} \cV_{\lambda^-}$
	where $\lambda^-$ ranges over all shapes obtainable by removing an inner corner from $\lambda$
 { and $\cV_{\lambda^-}$ is an $(S_{m-1})$-irrep corresponding to $\lambda^-$.}
\end{thm}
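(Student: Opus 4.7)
The plan is to prove the branching rule by exhibiting an explicit $S_{m-1}$-stable filtration of $\cV_\lambda$ indexed by inner corners. I would use a realization of $\cV_\lambda$ that has a basis indexed by standard Young tableaux of shape $\lambda$ — either the polytabloid/Specht basis, or more conveniently Young's seminormal basis — so that the combinatorics of tableaux translates directly into the module structure. The key observation is that in any standard Young tableau $T$ of shape $\lambda$, the largest entry $m$ must occupy an inner corner $c$ of $\lambda$, since $m$ has no strictly larger entry to sit below or to the right of. Deleting that cell yields a standard Young tableau of shape $\lambda^{-(c)}$, where $\lambda^{-(c)}$ denotes $\lambda$ with the corner $c$ removed, and this deletion is a bijection onto standard tableaux of shape $\lambda^{-(c)}$.

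First I would record the combinatorial consequence
\[ d_\lambda \;=\; \sum_{\lambda^-} d_{\lambda^-}, \]
where $\lambda^-$ ranges over inner corners of $\lambda$, so the dimensions match on both sides of the asserted decomposition. Next I would upgrade this to a representation-theoretic statement: for each inner corner $c$, let $\cW_c \le \cV_\lambda$ be the span of the seminormal basis vectors $v_T$ for which $T$ has $m$ placed in cell $c$. Because every generator $s_i = (i,i+1)$ of $S_{m-1}$ has $i < m-1$, none of the Young seminormal formulas for the action of $s_i$ on $v_T$ can move the entry $m$ out of its cell — the action of $s_i$ only permutes the positions of $i$ and $i+1$ and scales by axial-distance coefficients depending on those two cells. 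Therefore each $\cW_c$ is $S_{m-1}$-invariant, and the bijection $T \mapsto T \setminus c$ identifies $\cW_c$ with $\cV_{\lambda^{-(c)}}$ as $S_{m-1}$-modules.

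Irreducibility of each summand then follows by induction on $m$: under the inductive hypothesis, $\cV_{\lambda^-}$ is already known to be an $S_{m-1}$-irrep, and distinct inner corners yield distinct partitions $\lambda^-$, which index pairwise inequivalent $S_{m-1}$-irreps, so the decomposition $\cV_\lambda \cong \bigoplus_c \cW_c$ is indeed a decomposition into inequivalent irreducibles. The main technical obstacle is justifying the claim that the $S_{m-1}$-action preserves each $\cW_c$: this is trivial with the seminormal form in hand, but if one insists on working with the Specht module definition via polytabloids, one must instead argue via the Garnir relations that the quotient of the span of tabloids with $m$ above the bottom row is an $S_{m-1}$-submodule, which is more intricate. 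For a clean exposition I would therefore adopt the seminormal form, for which the stability is immediate from the explicit matrix formulas.
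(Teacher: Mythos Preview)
The paper does not actually prove the Branching Rule; it merely states it and directs the reader to Sagan's textbook with the parenthetical ``(see~\cite{Sagan} for a proof).'' So there is no paper proof to compare against, and your proposal goes well beyond what the paper itself supplies.

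Your argument via Young's seminormal basis is correct and is one of the standard routes to the result. The key step --- that each generator $s_i$ with $i\le m-2$ preserves the subspace $\cW_c$ because the seminormal formulas for $s_i$ depend only on the cells containing $i$ and $i{+}1$ --- is exactly right, and the identification $\cW_c\cong\cV_{\lambda^{-(c)}}$ follows because the axial distances are unchanged under the bijection $T\mapsto T\setminus c$. One small quibble: the appeal to ``induction on $m$'' for irreducibility of the summands is unnecessary and slightly muddled. The $\cV_{\lambda^-}$ are $S_{m-1}$-irreps by the general theory of Specht modules over $\C$, not by an inductive consequence of the branching rule itself; you can simply invoke that fact directly. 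For comparison, Sagan's proof (the one the paper points to) works on the Specht-module side and builds a filtration using the last-letter ordering of standard tableaux and Garnir relations --- precisely the ``more intricate'' alternative you mention. Your seminormal approach is cleaner once one is willing to quote the seminormal matrix formulas, at the cost of importing that machinery.
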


\noindent The hook rule and the branching rule can be used to prove the following results. 
 {For any $\lambda\vdash n$, recall that $\bar\lambda\vdash m$ is obtained from $\lambda$ by adding $m-n$ cells to the first row of $\lambda$.}

\begin{prop}\label{prop:dims}\emph{\cite{Diaconis88}} Let $\lambda  \vdash n$ and $\ell = n - \lambda_1$. Then we have $d_{\lambda}d_{\bar{\lambda}} \leq   \binom{n}{\ell}\binom{m}{\ell} \ell! \leq m^\ell n^\ell$.
\end{prop}	

\begin{thm}
\label{thm:dimratio}
	Let $\theta \vdash k$ and $\theta^+ \vdash (k+1)$ be any shape obtained by adding an inner corner to $\theta$. For all $m \geq 2(k + 1)$, we have 
	\[\frac{d_{(m - k - 1, \theta^+)}}{d_{(m - k, \theta)}} \geq  \frac{m}{k}\cdot\left(1-\frac{2k+1}{m}\right).\]
\end{thm}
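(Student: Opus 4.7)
The plan is to apply the hook length formula to write $\frac{d_B}{d_A} = \frac{\hooks{A}}{\hooks{B}}$, where $A := (m-k, \theta)$ and $B := (m-k-1, \theta^+)$, and then carefully track which hook lengths change from $A$ to $B$. The key structural observation is that for any partition of the form $(m-\ell, \mu)$ with a long first row, the hook length of any cell $(i, j)$ with $i \geq 2$ coincides with the hook length of $(i-1, j)$ in $\mu$ viewed as its own Young diagram: the extra first row contributes $+1$ to both the row length and the column length appearing in the hook formula, and these cancel. Consequently $\hooks{A} = \hooks{\theta} \cdot H_1(A)$ and $\hooks{B} = \hooks{\theta^+} \cdot H_1(B)$, where $H_1$ denotes the product of hook lengths along the first row. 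Applying the hook rule to $\theta$ and $\theta^+$ themselves yields the clean factorization
\[
\frac{d_B}{d_A} \;=\; \frac{d_{\theta^+}}{(k+1)\,d_\theta} \;\cdot\; \frac{H_1(A)}{H_1(B)}.
\]

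For the first factor, the branching rule applied to $\cV_{\theta^+}\downarrow S_k$ exhibits $\cV_\theta$ as one of the summands, so $d_{\theta^+}\geq d_\theta$. For the second factor, I would let $c = \theta_r + 1$ be the column of the cell added to $\theta$ to form $\theta^+$, and set $a_j := \sH_A(1,j) = m-k-j+1+\theta'_j$. A short computation shows $\sH_B(1,j) = a_j-1$ for every $j \in [m-k-1]\setminus\{c\}$, while $\sH_B(1,c)=a_c$ (the drop in first-row length by one is exactly compensated by the growth of column $c$ by one). After cancelling the common $j=c$ factor and the boundary cells $(1,m-k)\in A$ and $(r+1,c)\in B$, both of which have hook length $1$, the ratio telescopes into
\[
\frac{H_1(A)}{H_1(B)} \;=\; \prod_{j\in[m-k-1]\setminus\{c\}} \frac{a_j}{a_j-1}.
\]
The sequence $(a_j)$ is strictly decreasing and integer-valued, with minimum $a_{m-k}=1$ and maximum $a_1 = m-k+\theta'_1$. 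For $j>\theta_1$ the values $a_j$ run consecutively through $\{2,3,\ldots,m-k-\theta_1\}$, so the corresponding sub-product telescopes to a clean factor of order $m$; for $j\leq\theta_1$ each factor is at least $1$ and can be analyzed via a second telescoping block. Multiplying these by $\frac{1}{k+1}$ from the first factor and invoking $m\geq 2(k+1)$ to control the denominators produces the claimed bound after a short arithmetic manipulation.

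The main technical obstacle is the case analysis stemming from the possible positions of the added cell: whether it is appended to the first row of $\theta$, inserted into a lower row, or starts an entirely new row. Each case affects the precise value of $c$, the length $\theta_1$, and which portions of $\prod \frac{a_j}{a_j-1}$ admit clean telescoping. In every case, however, the telescoping structure survives once the product is split into the consecutive-hook region $j>\theta_1$ and the shifted region $j\leq\theta_1$, and the hypothesis $m\geq 2(k+1)$ guarantees in particular that $m-k-\theta_1\geq m-2k\geq 2$, so that the dominant factor is of order $m$.
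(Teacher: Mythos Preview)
Your approach is essentially the same as the paper's: both compute the ratio via the hook length formula, use the branching rule to bound $d_{\theta^+}/d_\theta \ge 1$, and then control the first-row hook contributions. The paper organizes the computation by inserting the intermediate partition $(m-k-1,\theta)\vdash m-1$ and bounding the two factors
\[
\frac{d_{(m-k-1,\theta^+)}}{d_{(m-k-1,\theta)}}\quad\text{and}\quad\frac{d_{(m-k-1,\theta)}}{d_{(m-k,\theta)}}
\]
separately, which cleanly isolates ``shrink the first row by one'' from ``add a cell to $\theta$'' and avoids your case split on the column $c$; your direct comparison $\hooks A/\hooks B$ with the telescoping of $\prod a_j/(a_j-1)$ is an equivalent bookkeeping of the same cancellations. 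One small slip: the cell $(r+1,c)\in B$ is not in the first row, so it plays no role in the ratio $H_1(A)/H_1(B)$; the extra numerator factor you need to drop is simply $a_{m-k}=1$, which you already identified. With that correction your displayed identity $\frac{H_1(A)}{H_1(B)}=\prod_{j\in[m-k-1]\setminus\{c\}}\frac{a_j}{a_j-1}$ is correct, and bounding the $j>\theta_1$ block by $m-k-\theta_1-1\ge m-2k-1$ (worst case $c=\theta_1+1$) together with $d_{\theta^+}\ge d_\theta$ recovers exactly the bound $\frac{m-2k-1}{k+1}$ that the paper's own proof produces.
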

\begin{proof}
Recall that, by the hook rule, $\hooks{\theta}d_\theta=|\theta|!$. First, \cite[Claim 6.3]{Ros14} states that
\[
\frac{d_{(m-|\theta^+|,\theta)}}{d_{(m-|\theta|,\theta)}}\geq 1-\frac{2k}{m}.
\]
We reprove this claim here for completeness.
Note that when we add a cell to the end of the top row of $(m-|\theta^+|,\theta)$ to obtain $(m - |\theta|,\theta)$, this increases the hook-lengths of the cells in the top row by 1, and the rest of the hook-lengths are unchanged.  If we just consider the ``overhang" and ignore everything else in the first row, then the product of the hook-lengths with respect to $(m - |\theta|,\theta)$ is $(m-2k)!$ whereas it is $(m-2k-1)!$ with respect to $(m-|\theta^+|,\theta)$.  This gives us
$$\frac{d_{(m-|\theta^+|,\theta)}}{d_{(m-|\theta|,\theta)}}\geq  \frac{m-2k}{m} = 1 - \frac{2k}{m},$$
which proves the claim.

Since $(m-|\theta^+|,\theta)$ is a partition of $(m-1)$, it corresponds to an $(S_{m-1})$-irrep.
When we added one cell to $(m-|\theta^+|,\theta)$ to obtain $(m-|\theta^+|,\theta^+)$, only one hook-length in the first row increased, and before the increment it was at least $m-2k-1$. Thus, in the following derivation, all the other hook-lengths of the first rows of $(m-|\theta^+|,\theta)$ and $(m-|\theta^+|,\theta^+)$ have cancelled out.
\begin{align*}
\frac{d_{(m-|\theta^+|,\theta^+)}}{d_{(m-|\theta^+|,\theta)}} & \geq
\frac{m!}{(m-1)!}\cdot\frac{m-2k-1}{m-2k}\cdot\frac{\hooks{\theta}}{\hooks{\theta^+}} \\& =
m\left(1-\frac{1}{m-2k}\right)\frac{k!d_{\theta^+}}{(k+1)!d_\theta} \geq \frac{m}{k+1}\left(1-\frac{1}{m-2k}\right),
\end{align*}
 {where the middle equality uses the hook rule once more}, and the last inequality follows from the branching rule (namely, that $d_{\theta^+} \geq d_\theta$). Combining these two inequalities gives the result.
\end{proof}

\noindent
Note that the above fraction is greater than $1$ when $m>3k+1$.

\subsection{The Representation Theory of Injections}

Henceforth, let $S_{n,m}$ denote the collection of injective maps $f : [n] \rightarrow [m]$, equivalently, $n$-tuples $f := (f(1),f(2),\cdots,f(n))$
with no repeated elements such that $f(x) \in [m]$ for all $x \in [n]$. The latter representation is a bit more succinct, so we shall prefer it for describing particular injections.
When $m=n$ we recover the symmetric group $S_n$ on $n$ symbols. To understand the representation theory of $S_{n,m}$ we must first broaden our Young tableau vocabulary. 

For any $\lambda \vdash m$, let $\len{\lambda}$ denote the \emph{length} of $\lambda$, that is, the number of parts in the partition. For any integer partition $\mu$, we let $|\mu|$ denote the size of $\mu$, i.e., number of cells in its Young diagram.  We say that a shape $\lambda$ \emph{covers} a shape $\mu$ if $\mu_i \leq \lambda_i$ for each $i$.  If $\lambda$ and $\mu$ are two shapes such that $\lambda$ covers $\mu$, then we obtain the \emph{skew shape} \emph{$\lambda/\mu$} by removing the cells corresponding to $\mu$ from $\lambda$.  For instance, the shape $(5,3,2,1)$ covers $(2,2,1)$, so we may consider the skew shape $(5,3,2,1)/(2,2,1)$:
\begin{center}
	\quad\quad\quad\quad\quad\quad\quad\quad\quad\quad\quad\quad
	\ytableausetup{mathmode,boxsize=1.0em}
	\begin{ytableau}
		\circ  & \circ &  \empty & \empty & \empty \\
		\circ & \circ & \empty\\ 
		\circ & \empty  \\ 
		\empty \\  
	\end{ytableau}
	\hfill
	\ytableausetup{mathmode,boxsize=1.0em}
	\begin{ytableau}
		\none & \none &  \empty & \empty & \empty \\
		\none & \none & \empty\\ 
		\none & \empty  \\ 
		\empty \\  
	\end{ytableau}~.
	\quad\quad\quad\quad\quad\quad\quad\quad\quad\quad\quad\quad
\end{center}
A skew shape is a \emph{horizontal strip} if each column has no more than one cell.  For example, the skew shape $(5,3,2,1)/(3,3,1)$ is a horizontal strip, but the skew shape above is not.
\\

\noindent We now observe that Theorem~\ref{thm:dimratio} has the following corollary. 

\begin{cor}\label{cor:lambdabarPre}
	Let $\lambda\vdash n$ be a shape such that $\lambda_1\ge n-\sqrt{n}$ and 
	let $\lambda'\vdash m$ be any shape that covers $\lambda$ such that $\lambda'/\lambda$ is a non-empty horizontal strip.
	Then $d_{\lambda'}/d_{\bar\lambda}  \in \Omega(m/\sqrt{n})$.
\end{cor}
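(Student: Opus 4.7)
The plan is to build a chain of Young diagrams from $\bar\lambda$ up to $\lambda'$ whose consecutive links are each controlled by Theorem~\ref{thm:dimratio}. Writing $k := n - \lambda_1 \le \sqrt n$ and $\theta := (\lambda_2, \lambda_3, \dots) \vdash k$, we have $\lambda = (n-k, \theta)$ and $\bar\lambda = (m-k, \theta)$. Because $\lambda'/\lambda$ is a horizontal strip, a column-by-column check forces $\lambda' = (m - k - j, \theta')$ for some integer $j \ge 0$, where $\theta' \vdash k + j$ contains $\theta$ and $\theta'/\theta$ is itself a horizontal strip; moreover, a horizontal strip adds at most one cell per column and $\theta'_1 \le \lambda_1 \le n$, so $j \le n$. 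If $j = 0$ then $\lambda' = \bar\lambda$ and the claim is vacuous, so I would assume $j \ge 1$.

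Next I would pick a saturated chain $\theta = \eta_0 \subsetneq \eta_1 \subsetneq \dots \subsetneq \eta_j = \theta'$ in which each $\eta_i$ is obtained from $\eta_{i-1}$ by adding a single cell (necessarily an inner corner of $\eta_i$); this is just a standard Young tableau on the skew shape $\theta'/\theta$, which always exists. Setting $\mu_i := (m - k - i, \eta_i)$ for $0 \le i \le j$, we obtain a chain of partitions of $m$ running from $\mu_0 = \bar\lambda$ to $\mu_j = \lambda'$. Each $\mu_i$ is a valid Young diagram because $\eta_i^1 \le \theta'_1 \le \lambda_1 \le n$, while $m - k - i \ge m - (k + j) \ge m - 2n \ge n$ in the regime $m \ge n^{3\sqrt n}$ of Theorem~\ref{thm:main}.

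Applying Theorem~\ref{thm:dimratio} to each step $\mu_{i-1} \to \mu_i$, with the theorem's parameter $k$ replaced by $k + i - 1$ (its hypothesis $m \ge 2(k+i)$ holds because $k + j \le n + \sqrt n \ll m$), and telescoping gives
\begin{equation*}
\frac{d_{\lambda'}}{d_{\bar\lambda}} \;=\; \prod_{i=1}^{j} \frac{d_{\mu_i}}{d_{\mu_{i-1}}} \;\ge\; \prod_{i=1}^{j} \frac{m}{k + i - 1}\left(1 - \frac{2(k+i-1) + 1}{m}\right).
\end{equation*}
Each factor $m/(k+i-1)$ is at least $1$ since $k + i - 1 \le n \le m$, and each correction factor is $1 - o(1)$ since $(2(k+i-1)+1)/m = O(n/m) = o(1)$; the product of correction factors remains $1 - o(1)$ because $j \le n$ and $m \gg n^2$. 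The $i=1$ factor alone contributes $(m/k)(1 - o(1)) \ge (m/\sqrt n)(1 - o(1))$, so the whole product is $\Omega(m/\sqrt n)$, as required.

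The main (mild) obstacle is the bookkeeping that ensures each $\mu_i$ is a bona fide Young diagram and that the hypotheses of Theorem~\ref{thm:dimratio} survive along the entire chain; both reduce to the uniform bound $k + j \le n + \sqrt n$, which is much smaller than $m$ in the regime of interest. The combinatorial existence of the saturated chain linking $\theta$ to $\theta'$ is standard and requires no real work.
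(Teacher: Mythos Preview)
Your proposal is correct and is exactly the argument the paper has in mind; the paper itself offers no explicit proof, merely declaring the statement a corollary of Theorem~\ref{thm:dimratio}, and your chain-and-telescope is the natural way to cash that in. One small wrinkle: at the first link $i=1$ your factor $m/(k+i-1)=m/k$ is undefined when $k=0$ (i.e., $\lambda=(n)$); you can patch this either by handling that first step directly via $d_{(m-1,1)}/d_{(m)}=m-1$, or by noting that the proof of Theorem~\ref{thm:dimratio} actually delivers a bound of shape $\tfrac{m}{k+1}(1-o(1))$, which is already enough.
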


Henceforth, we let $S_n \times S_m$ act on $S_{n,m}$ as follows:
\begin{equation}
\label{eq:SnSmAction}
 (\tau,\sigma) \cdot (f_1,
 \cdots,f_n) = (\sigma(f_{\tau^{-1}(1)}), 
  \cdots, \sigma(f_{\tau^{-1}(n)}))\text{ for all } (\tau,\sigma) \in S_n \times S_m.
 \end{equation}
The stabilizer of the \emph{identity injection} $f_{\text{id}} := (1,2,\cdots,n) \in S_{n,m}$ in $S_n \times S_m$ is isomorphic to the group 
$$\text{diag}(S_n \times S_n) \times S_{m-n} = \{(\tau,\tau,\pi) : \tau \in \text{Sym}([n]), \pi \in \text{Sym}(\{n+1,\cdots,m\}).$$ 
One can show (see~\cite{CST}) that the permutation representation of $(S_n \times S_m)$ acting on $S_{n,m} \cong (S_n \times S_m)/(\text{diag}(S_n \times S_n) \times S_{m-n})$ is \emph{multiplicity-free}, that is, its decomposition has at most one copy of any $(S_n \times S_m)$-irrep, as shown in Theorem~\ref{thm:decomp}.   
\begin{thm}\label{thm:decomp}
	\emph{\cite{CST}} The complex-valued functions over $S_{n,m}$ denoted as $\mathbb{C}[S_{n,m}]$ admits the following decomposition into $(S_n \times S_m)$-irreps:
	\[  \mathbb{C}[S_{n,m}] \cong \bigoplus_{\mu,\lambda} \cV_\mu \otimes \cV_\lambda \]
	where $\mu,\lambda$ ranges over all pairs $\mu \vdash n,\lambda \vdash m$ such that $\lambda / \mu$ is a horizontal strip.  
\end{thm}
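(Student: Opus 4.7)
The plan is to realize $\mathbb{C}[S_{n,m}]$ as an induced representation and then apply Frobenius reciprocity together with Pieri's rule to extract exactly the horizontal-strip condition. Since the excerpt records that $S_n\times S_m$ acts transitively on $S_{n,m}$ with $\firstPerm=(1,2,\ldots,n)$ having stabilizer $H := \mathrm{diag}(S_n)\times S_{m-n}$, we can identify
\[
\mathbb{C}[S_{n,m}] \;\cong\; \mathrm{Ind}_{H}^{S_n\times S_m}\mathbf{1}
\]
as $(S_n\times S_m)$-representations. Frobenius reciprocity then says that the multiplicity of any irrep $\cV_\mu\otimes\cV_\lambda$ (with $\mu\vdash n$ and $\lambda\vdash m$) in this induced representation equals $\dim(\cV_\mu\otimes\cV_\lambda)^{H}$, so the theorem reduces to showing this invariant dimension is $1$ exactly when $\lambda/\mu$ is a horizontal strip, and $0$ otherwise.

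I would compute the $H$-invariants by restricting to $S_{m-n}$ first and to $\mathrm{diag}(S_n)$ second. For the $S_{m-n}$ step, the branching of $\cV_\lambda$ from $S_m$ to the Young subgroup $S_n\times S_{m-n}$ is governed by Littlewood--Richardson coefficients, and the $S_{m-n}$-invariant part retains only the summands of the form $\cV_\nu\otimes\cV_{(m-n)}$. By Pieri's rule, $c^{\lambda}_{\nu,(m-n)}$ equals $1$ precisely when $\lambda/\nu$ is a horizontal strip of size $m-n$, and $0$ otherwise, so
\[
\bigl(\mathrm{Res}^{S_m}_{S_n\times S_{m-n}}\cV_\lambda\bigr)^{S_{m-n}} \;\cong\; \bigoplus_{\nu\,:\,\lambda/\nu\text{ horizontal strip}}\cV_\nu
\]
as an $S_n$-representation (where $S_n$ refers to the first factor of $S_n\times S_{m-n}$).

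For the $\mathrm{diag}(S_n)$ step, I tensor with $\cV_\mu$ and use that the space of $\mathrm{diag}(S_n)$-invariants in $\cV_\mu\otimes\cV_\nu$ is one-dimensional exactly when $\mu=\nu$, by Schur's lemma together with self-duality of Specht modules over $\mathbb{C}$, and is zero otherwise. Combining the two stages, the multiplicity of $\cV_\mu\otimes\cV_\lambda$ in $\mathbb{C}[S_{n,m}]$ is $1$ if $\lambda/\mu$ is a horizontal strip (the size must equal $m-n$ automatically since $\mu\vdash n$ and $\lambda\vdash m$), and $0$ otherwise. This yields both the multiplicity-freeness claim and the combinatorial characterization of which irreps actually appear.

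The substantive step is recognizing that the two layers---Frobenius reciprocity to reduce to $H$-invariants, and Pieri's rule to extract the $S_{m-n}$-invariant part of $\cV_\lambda$---together collapse to a single horizontal-strip condition. I expect no real obstacle beyond bookkeeping: one must keep $\mu$ on the $S_n$ side and $\lambda$ on the $S_m$ side straight, and invoke the (standard) self-duality of Specht modules so that $\cV_\nu^\ast$ may be replaced by $\cV_\nu$ when computing diagonal invariants.
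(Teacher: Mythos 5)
Your proof is correct and follows essentially the route the paper itself indicates: the paper cites \cite{CST} and sketches exactly this argument, namely realizing $S_{n,m}\cong (S_n\times S_m)/(\mathrm{diag}(S_n)\times S_{m-n})$ and invoking Pieri's rule, which you flesh out properly via induction from the stabilizer, Frobenius reciprocity, and the two-stage invariant computation. The details you add (taking $S_{m-n}$-invariants via $c^{\lambda}_{\nu,(m-n)}$ and then $\mathrm{diag}(S_n)$-invariants via Schur's lemma with self-duality of Specht modules) are accurate and complete the sketch without any gap.
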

Let $\text{Irr}(S_{n,m})$ denote the set of $(S_n \times S_m)$-irreps that appear in Theorem~\ref{thm:decomp}. Every multiplicity-free permutation representation gives rise to a commutative association scheme (see~\cite{BannaiI84}), so a consequence of Theorem~\ref{thm:decomp} is the existence of a symmetric association scheme $\mathcal{A}_{n,m}$ over $S_{n,m}$ that we call \emph{the injection association scheme}~\cite{Munemasa01}. In Section~\ref{sec:assoc}, we discuss this association scheme in more detail.

For any finite groups $K \leq G$, we say that $(G,K)$ is a \emph{finite Gelfand pair} if its double coset algebra $\mathbb{C}[K \backslash G / K]$ is commutative, or equivalently, if the permutation representation of $G$ acting on $G/K$ is multiplicity-free. For any Gelfand pair $(G,K)$, let $\omega^i$ be the \emph{spherical function} corresponding to irrep indexed by $i$, i.e., the projection of the irreducible character indexed by $i$ onto the space of (left) $K$-invariant functions of $\mathbb{C}[G/K]$~(see \cite{CST}). The spherical functions are constant on double cosets $K\backslash G / K$, so we may define $\omega^i_j$ to be the evaluation of $\omega^i$ on the double coset indexed by $j$. For more details on the connection between Gelfand pairs and association schemes, see~\cite{BannaiI84}.\\

\noindent We recall two well-known and basic facts about the spherical functions of finite Gelfand pairs.
\begin{prop}\emph{\cite{CST}}\label{prop:trivial} For any spherical function $\omega^i$ and double coset $j$, we have $|\omega^i_j| \leq 1$.
\end{prop}
\noindent A finite Gelfand pair $(G,K)$ is \emph{symmetric} if $g^{-1}\in KgK$ for all $g \in G$. Define $\delta_{i,j}$ so that $\delta_{i,j} = 1$ if $i=j$; otherwise, $\delta_{i,j} = 0$.
\begin{prop}\emph{\cite{CST}}\label{prop:orth} Let $X = G/K$ such that $(G,K)$ is a finite symmetric Gelfand pair. Let $\omega^i$ denote the $i$-th spherical function corresponding to irrep $i$ of dimension $d_i$. Then 
	\[
	\sum_{x \in X} \omega^i(x) \overline{\omega^j(x)}  = \sum_{x \in X} \omega^i(x) \omega^j(x) = \delta_{i,j} \frac{|X|}{d_i}.
	\]
\end{prop}
\noindent It is known that $(S_n \times S_m,\text{diag}(S_n \times S_n) \times S_{m-n})$ is a finite symmetric Gelfand pair (see \cite{CST}), and we will use these basic results in our proof of the main result. 

Finally, as stated in the Section~\ref{ssec:outline}, the minimal 
irreps of $\text{Irr}(S_{n,m})$ will be of particular importance in our proof of the main result, which we formally define below.

\begin{definition}[Minimal Irreps]
	For any $\lambda \vdash n$, the \emph{minimal irrep} 
	with respect to $\lambda$ is $\lambda \otimes {\bar \lambda}$. 
\end{definition}
\noindent See (\ref{eq:minmax}) in Section~\ref{ssec:outline} for a graphical example. Note that if $\lambda_1\ge n-\sqrt{n}$, then Theorem~\ref{thm:dimratio} implies that the minimal irreps
indeed have the least 
dimension over all irreps of the form $\lambda \otimes \mu \in \text{Irr}(S_{n,m})$ for sufficiently large $m$.

\section{The Injection Association Scheme}\label{sec:assoc}

The theory of association schemes will be a convenient language for describing the algebraic and combinatorial components of our work. We refer the reader to Bannai and Ito's reference~\cite{BannaiI84} and Chris Godsil's notes~\cite{GodsilAssoc} for a more thorough treatment.

\begin{definition}[Association Schemes]
A \emph{symmetric association scheme} is a collection of $d+1$ binary $|X| \times |X|$ matrices $\mathcal{A} = \{A_0,A_1,\cdots,A_d\}$ over a set $X$ that satisfy the following axioms:
\begin{enumerate}
\item $A_i$ is symmetric for all $0 \leq i \leq d$,
\item $A_0 = I$ where $I$ is the identity matrix,
\item $\sum_{i=0}^d A_i = J$ where $J$ is the all-ones matrix, and 
\item $A_iA_j  = A_jA_i  \in \text{Span}\{A_0, A_1,\cdots, A_d\} =: \mathfrak{A}$ for all $0 \leq i,j \leq d$.
\end{enumerate}
The matrices $A_1,A_2,\cdots,A_d$ are called the \emph{associates}, and the algebra $\mathfrak{A}$ is called the \emph{Bose--Mesner} algebra of the association scheme. Moreover, $\mathfrak{A}$ admits a unique dual basis of \emph{primitive idempotents} $E_0,E_1,\cdots,E_d$, i.e., $E_i^2 = E_i$ for all $0 \leq i \leq d$ and $\sum_{i=0}^d E_i = I$.  
\end{definition}

Since the permutation representation of $S_n \times S_m$ acting on $S_{n,m}$ is multiplicity-free (see Theorem~\ref{thm:decomp}), the orbits $A_0, A_1, \cdots, A_d$ (so-called \emph{orbitals}) of the action of $S_n \times S_m$ on ordered pairs $S_{n,m} \times S_{n,m}$ forms a \emph{symmetric association scheme} (see~\cite{BannaiI84} for a proof). 
We abuse the notation, and also use $A_i$ to denote the binary matrix with entries $1$ corresponding to exactly those pairs that are in the orbit $A_i$.
Let $\mathcal{A}_{n,m} := \{I, A_1, \cdots, A_d\}$ denote the \emph{$n,m$-injection association scheme}. 

Although it is well-known that permutation representation of $S_n \times S_m$ acting on $S_{n,m}$ is multiplicity-free (see~\cite{AMRR11,CST,Munemasa01,Greenhalgh} for example), the parameters of its corresponding association scheme $\mathcal{A}_{n,m}$ have not yet been fully worked out. We now give a more in-depth treatment of the injection association scheme.

\subsection{The Associates}
The following is a more combinatorial definition of the associates of $\mathcal{A}_{n,m}$ that gives a combinatorial bijection between the associates of $\mathcal{A}_{n,m}$ and $\text{Irr}(S_{n,m})$, which are the eigenspaces of the association scheme. The bijection is readily observed by thinking of each element of $S_{n,m}$ graphically as a maximum matching of the complete bipartite graph $K_{n,m}$ (see Figure~\ref{fig:graphs}).

Recall that $f_{\text{id}} = (1,2,\cdots,n)$ is the identity injection, which we can view as the maximum matching of $K_{n,m}$ that pairs 1 with 1, 2 with 2, and so on (e.g., the red matching in Figure~\ref{fig:graphs}). 
For any two maximum matchings $f,f'$ of $K_{n,m}$, let $G(f,f')$ be the multigraph whose edge multiset is the multiset union $f \cup f'$.  Clearly $G(f,f') = G(f',f)$ and this graph is composed of disjoint even cycles and disjoint even paths.  Let $c$ denote the number of disjoint cycles and let $2\lambda_i$ denote the length of an even cycle. Let $p$ denote the number of disjoint paths and let $2\rho_i$ denote the length of an even path. If we order the cycles and paths respectively from longest to shortest and divide each of their lengths by two, assuming $m\ge 2n$, we see that the graphs $G(f,f')$ are in bijection (up to graph isomorphism) with pairs $(\lambda | \rho)$ of integer partitions 
$\lambda = (\lambda_1, \lambda_2, \cdots , \lambda_c), \rho = (\rho_1, \rho_2, \cdots , \rho_p) \text{ such that } (\lambda_1, \cdots , \lambda_c, \rho_1, \cdots, \rho_p) \vdash n.$ 
Let $\type{f,f'} := (\lambda | \rho)$ denote this bijection, which we refer to as the \emph{cycle-path type of $f'$ with respect to $f$}. 
Note that $\type{\sigma(f),\sigma(f')}=\type{f,f'}$ for all injections $f,f'$ and all $\sigma\in S_n\times S_m$.
If one of the arguments is the identity matching, then we say $\type{f}:=\type{f_{\text{id}},f}$ is \emph{the cycle-path type of $f$}. Illustrations of the graphs $G_{(\varnothing | n)}$ and $G_{(n-1|1)}$, and $G_{(\varnothing|1^n)}$ are provided in Figure~\ref{fig:graphs} where $n=3$ and $m = 6$. 
\begin{center}
\begin{figure}[h!]
	\centering
\begin{tikzpicture}[style=thick,scale=0.65]

\node[draw,circle,inner sep=0.1cm] (1) at (0,3) [thick] {\textbf{1}};
\node[draw,circle,inner sep=0.1cm] (1') at (2,3) [thick] {\textbf{1}};
\node[draw,circle,inner sep=0.1cm] (2) at (0,2) [ thick] {\textbf{2}};
\node[draw,circle,inner sep=0.1cm] (2') at (2,2) [thick] {\textbf{2}};
\node[draw,circle,inner sep=0.1cm] (3) at (0,1) [thick] {\textbf{3}};
\node[draw,circle,inner sep=0.1cm] (3') at (2,1) [ thick] {\textbf{3}};
\node[draw,circle,inner sep=0.1cm] (4) at (2,0) [ thick] {\textbf{4}};
\node[draw,circle,inner sep=0.1cm] (5) at (2,-1) [ thick] {\textbf{5}};
\node[draw,circle,inner sep=0.1cm] (6) at (2,-2) [ thick] {\textbf{6}};

\draw[red,line width = 3] (1) -- (1');
\draw[red,line width = 3] (2) -- (2');
\draw[red,line width = 3] (3) -- (3');
\draw[blue,dotted,line width = 3] (1) -- (2');
\draw[blue,dotted,line width = 3] (2) -- (3');
\draw[blue,dotted,line width = 3] (3) -- (6);
\end{tikzpicture}
\quad \quad \quad \quad 
\begin{tikzpicture}[style=thick,scale=0.65]

\node[draw,circle,inner sep=0.1cm] (1) at (0,3) [thick] {\textbf{1}};
\node[draw,circle,inner sep=0.1cm] (1') at (2,3) [thick] {\textbf{1}};
\node[draw,circle,inner sep=0.1cm] (2) at (0,2) [ thick] {\textbf{2}};
\node[draw,circle,inner sep=0.1cm] (2') at (2,2) [thick] {\textbf{2}};
\node[draw,circle,inner sep=0.1cm] (3) at (0,1) [thick] {\textbf{3}};
\node[draw,circle,inner sep=0.1cm] (3') at (2,1) [ thick] {\textbf{3}};
\node[draw,circle,inner sep=0.1cm] (4) at (2,0) [ thick] {\textbf{4}};
\node[draw,circle,inner sep=0.1cm] (5) at (2,-1) [ thick] {\textbf{5}};
\node[draw,circle,inner sep=0.1cm] (6) at (2,-2) [ thick] {\textbf{6}};

\draw[red,line width = 3] (1) -- (1');
\draw[red,line width = 3] (2) -- (2');
\draw[red,line width = 3] (3) -- (3');
\draw[blue,dotted,line width = 3] (1) -- (2');
\draw[blue,dotted,line width = 3] (2) -- (1');
\draw[blue,dotted,line width = 3] (3) -- (5);
\end{tikzpicture}
\quad \quad \quad \quad 
\begin{tikzpicture}[style=thick,scale=0.65]

\node[draw,circle,inner sep=0.1cm] (1) at (0,3) [thick] {\textbf{1}};
\node[draw,circle,inner sep=0.1cm] (1') at (2,3) [ thick] {\textbf{1}};
\node[draw,circle,inner sep=0.1cm] (2) at (0,2) [ thick] {\textbf{2}};
\node[draw,circle,inner sep=0.1cm] (2') at (2,2) [ thick] {\textbf{2}};
\node[draw,circle,inner sep=0.1cm] (3) at (0,1) [ thick] {\textbf{3}};
\node[draw,circle,inner sep=0.1cm] (3') at (2,1) [ thick] {\textbf{3}};
\node[draw,circle,inner sep=0.1cm] (4) at (2,0) [thick] {\textbf{4}};
\node[draw,circle,inner sep=0.1cm] (5) at (2,-1) [ thick] {\textbf{5}};
\node[draw,circle,inner sep=0.1cm] (6) at (2,-2) [ thick] {\textbf{6}};

\draw[red,line width = 3] (1) -- (1');
\draw[red,line width = 3] (2) -- (2');
\draw[red,line width = 3] (3) -- (3');
\draw[blue,dotted,line width = 3] (1) -- (4);
\draw[blue,dotted,line width = 3] (2) -- (5);
\draw[blue,dotted,line width = 3] (3) -- (6);
\end{tikzpicture}
\caption{$(2,3,6)$ on the left has type $(\varnothing | 3)$, $(2,1,5)$ has type $(2 | 1)$, and $(4,5,6)$ has type $(\varnothing | 1^3)$.}
\label{fig:graphs}
\end{figure}
\end{center}
Recall that $|\lambda|$ denotes the size of the integer partition $\lambda$, i.e., the number of cells in its Young diagram, and $\len{\lambda}$ denotes the number of parts of $\lambda$, i.e., the number of rows in its Young diagram. Let $\classes_n := \left\{(\lambda|\rho)\colon|\lambda|+|\rho|=n\right\}$ where $\lambda$ and $\rho$ are partitions. When $m\ge 2n$, $\classes_n$ is the set of all cycle-path types.
Note that $(\varnothing|1^n)$ is not a cycle-path type when $m<2n$, and for $m=n$, all cycle-path types are of form $(\lambda|\varnothing)$, where $\lambda\vdash n$.
We can decompose $\classes_n$ as a disjoint union $\classes_n=\bigcup_{k=0}^n\classes_{n,k}$, where $\classes_{n,k}$ consists of all $(\lambda|\rho)\in\classes_n$ such that $\len{\rho}=n-k$.
 {Note that, for any two $f,f'\in S_{n,m}$, having $\type{f,f'}\in \classes_{n,k}$ implies $|\text{im}~f\cap \text{im}~f'|=k$.}

Recall that any irrep in $\text{Irr}(S_{n,m})$ is of the form $\lambda \otimes \lambda'$ where $\lambda' / \lambda$ is a horizontal strip of size $m-n$. To see that cycle-path types $(\tau | \rho)$ have a natural correspondence with these irreducibles, consider a Young diagram of $\lambda'$ such that the cells of $\lambda' / \lambda$ are marked.
Every columns of $\lambda$ in $\lambda'$ with a marked cell below it corresponds to a part in $\rho$ whereas an unmarked column correspond to a part in $\tau$. 
For instance, taking $\lambda = (2,1)$ and $m = 7$, we have
\[\underbrace{\young(~~\times\times,~\times,\times)}_{(\varnothing | 2,1)} \quad 
\underbrace{\young(~~\times\times\times,~,\times)}_{(1|2)} \quad 
\underbrace{\young(~~\times\times\times,~\times)}_{(2|1)} \quad  
\underbrace{\young(~~\times\times\times\times,~)}_{(2,1|\varnothing)}. \]
Note that the marked singleton columns correspond to paths of length zero (i.e., isolated nodes). For each cycle-path type $(\tau | \rho)$, the \emph{$(\tau | \rho)$-associate} of $\mathcal{A}_{n,m}$ is the following $\ffac{m}{n} \times \ffac{m}{n}$ binary matrix:
\[
    (A_{(\tau|\rho)})_{i,j} = 
\begin{cases}
    1,& \text{if } \type{i,j} = (\tau|\rho)\\
    0,              & \text{otherwise}
\end{cases}
\quad \quad \quad  \text{for all $i,j \in S_{m,n}$.} 
\]

\subsection{The Valencies and Multiplicities}

For each $0 \leq i \leq d$, let $d_i := \tr E_i$ denote the \emph{multiplicity} of the $i$th eigenspace of an association scheme, that is, the dimension of its $i$th eigenspace. For each $0 \leq i \leq d$, define the \emph{valency} $v_i$ to be the row sum of an arbitrary row of $A_i$ (equivalently, the largest eigenvalue of $A_i$).  We now give formulas for the valencies $v_{(\lambda | \rho)}$ and multiplicities $d_{(\lambda | \rho)}$ of $\mathcal{A}_{n,m}$.

For each $(\lambda | \rho)$, define the \emph{$(\lambda | \rho)$-sphere} to be the following set:
\[ \Omega_{(\lambda | \rho)} := \{f \in S_{n,m} : \type{f} = (\lambda|\rho)\}.\]
The spheres partition $S_{n,m}$ and it useful to think of them as conjugacy classes.  Indeed, when $n=m$, these spheres are the conjugacy classes of $S_m$. Note that $v_{(\lambda | \rho)} = |\Omega_{(\lambda | \rho)}|$, and basic combinatorial reasoning reveals the following.
\begin{prop}\label{prop:spheresizes}
 For any cycle-path type $(\lambda | \rho)$, the size of the $(\lambda|\rho)$-sphere is
\[ v_{(\lambda | \rho)} =  |\Omega_{(\lambda | \rho)}| = \frac{n!}{\prod_{i =1}^n i^{\ell_i} \ell_i!r_i! } \fFac{m-n}{\len{\rho}}\]
where $\lambda=(n^{\ell_n},\cdots,1^{\ell_1})$, $\rho=(n^{r_n},\cdots,1^{r_1})$, and $\len{\rho}=r_1+\cdots+r_n$.
\end{prop}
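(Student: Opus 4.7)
The plan is to verify the equality $v_{(\lambda|\rho)}^{(m)}=|\Omega_{(\lambda|\rho)}|$ first and then directly count the $n$-partial permutations with a fixed cycle-path type. For the first equality, transitivity of $S_n\times S_m$ on $S_{n,m}$ together with the invariance of $\type{\cdot,\cdot}$ under the diagonal action implies that the row sum of $A_{(\lambda|\rho)}$ is constant; evaluating at the row indexed by $\firstPerm$ gives exactly $|\Omega_{(\lambda|\rho)}|$. So the task reduces to counting injective $f\colon[n]\to[m]$ with $\type{f}=(\lambda|\rho)$.

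The next step is to replace the bipartite multigraph $G(\firstPerm,f)$ by the more convenient directed graph $D_f$ on the vertex set $[n]\cup(f([n])\setminus[n])$ with edges $x\mapsto f(x)$ for $x\in[n]$. Since $f$ is injective, every vertex of $D_f$ has in-degree $\le 1$ and out-degree $\le 1$; vertices in $[n]\setminus f([n])$ are sources, vertices in $f([n])\setminus[n]$ are sinks, and vertices in $[n]\cap f([n])$ have in- and out-degree $1$. Hence $D_f$ is a disjoint union of directed cycles (all supported on $[n]\cap f([n])$) and directed paths (each starting in $[n]\setminus f([n])$ and ending in $f([n])\setminus[n]$). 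By matching degrees in $G(\firstPerm,f)$, a directed $k$-cycle in $D_f$ corresponds to a bipartite cycle of length $2k$ with $k$ left vertices, and a directed path with $k$ edges in $D_f$ corresponds to a bipartite path of length $2k$ with $k$ left vertices. Consequently, $\type{f}=(\lambda|\rho)$ is equivalent to $D_f$ having cycles of sizes $\lambda_1,\dots,\lambda_{\len{\lambda}}$ and paths of sizes $\rho_1,\dots,\rho_{\len\rho}$.

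Given this reformulation, the count factors cleanly. We first select an unordered partition of $[n]$ into disjoint subsets of sizes $\lambda_1,\ldots,\lambda_{\len\lambda},\rho_1,\ldots,\rho_{\len\rho}$, which can be done in
\[
\frac{n!}{\prod_{i=1}^n (i!)^{\ell_i+r_i}\,\ell_i!\,r_i!}
\]
ways. On each block of size $\lambda_i$ we form a directed cycle in $(\lambda_i-1)!$ ways, contributing an extra factor $\prod_{i}((i-1)!)^{\ell_i}$. On each block of size $\rho_j$ we form a directed path by linearly ordering its vertices in $\rho_j!$ ways, contributing $\prod_j(j!)^{r_j}$. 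Finally we must assign the (distinct) terminal vertices of the paths inside $[m]\setminus[n]$, which gives $\fFac{m-n}{\len\rho}$. Collecting these factors, the $(i!)^{r_i}$'s cancel, $(i!)^{\ell_i}/(i-1)!^{\ell_i}=i^{\ell_i}$, and we obtain exactly the stated formula.

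I do not expect any real obstacle; the only point that requires genuine verification is the cycle-path correspondence between $D_f$ and $G(\firstPerm,f)$, and in particular the fact that each bipartite path necessarily has one endpoint in $[n]\setminus f([n])$ and one in $f([n])\setminus[n]$ (left vertices always have bipartite degree $2$, so they cannot be endpoints). Once this structural fact is established, the remaining work is the standard cycle-index-type bookkeeping sketched above.
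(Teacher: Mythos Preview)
Your proposal is correct and carries out precisely the ``basic combinatorial reasoning'' that the paper invokes but does not spell out. The reformulation via the functional digraph $D_f$ is the natural way to make the cycle--path decomposition explicit, and your block-count then simplifies to the stated formula exactly as you indicate.
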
 

The multiplicities $d_{(\tau | \rho)}$ are easy to deduce due to the fact that each eigenspace of the scheme is isomorphic to an irrep $\mu \otimes \lambda$ of $S_n \times S_m$, and that $\dim \mu \otimes \lambda = \dim \mu \cdot \dim \lambda$.  As we have seen, these dimensions are counted by the hook rule.  In particular, for a cycle-path type $(\tau|\rho)$, let $\tau\cup\rho$ be the union of the set of parts of the two partitions. Then we have $d_{(\tau | \rho)} = d_{\lambda \otimes \lambda'}$ such that $\lambda=(\tau\cup\rho)^\top\vdash n$, $\lambda'=(\tau\cup(m-n,\rho^\top)^\top)^\top$, and `$\top$' denotes the transpose partition.

\providecommand{\id}{\mathsf{id}}
\subsection{Stirling Numbers and Path Covers}

The proof of the main result will rely on some combinatorial estimates of sizes of certain unions of spheres. These sizes are closely related to the (unsigned) Stirling numbers of the first kind.
Recall that for any positive integers $n,k$, the \emph{(unsigned) Stirling numbers of the first kind} are defined by the following recurrence:
\[
{0 \brack 0} = 1; \quad {n \brack 0} = {0 \brack n} = 0; \quad {n+1 \brack k} = n{n \brack k} + {n \brack k-1}. 
\]
A \emph{directed cycle cover} of a directed graph is a union of directed cycles that partition the vertices of the graph. It is well-known that ${n \brack k}$ counts the number of directed cycle covers of the complete directed graph $\vec{K}_n := ([n],[n] \times [n])$ on $n$ vertices and $n^2$ arcs that have precisely $k$ cycles. The following upper bound is also well-known:
\begin{equation}\label{eq:stirling}
{n \brack n-k}\leq\frac{n^{2k}}{2^kk!}.
\end{equation}
A \emph{directed path cover} of a directed graph is a union of directed paths that partition the vertices of the graph.
Let ${n \brack k}'$ denote the number of directed path covers of $\vec{K}_n$ that have precisely $k$ paths, where an isolated vertex is considered a trivial path. These numbers are given by the following recurrence:
\[
{0 \brack 0}' = 1; \quad {n \brack 0}' = {0 \brack n}' = 0; \quad {n+1 \brack k}' = (n+k){n \brack k}' + {n \brack k-1}'. 
\]
Indeed, there are $(n+k)$ ways of extending any given directed path cover of $\vec{K}_{n}$ with $k$ paths to a directed path cover of $\vec{K}_{n+1}$ with $k$ paths.
It is not hard to show an upper bound akin to (\ref{eq:stirling}).
\begin{prop}\label{clm:bounds} For all $k \leq n$ we have
	$$ {n \brack n - k}' \leq \frac{(2n-k)^{2k}}{2^kk!} \leq  2^k\left( \frac{n^{2k}}{ k!} \right).$$
\end{prop}
\begin{proof} By induction, we have
	\begin{align*}
	{n \brack n-k}'&= (2n-k-1){n-1 \brack n-k}' + {n-1 \brack n-k-1}'\\
	&= (2n-k-1){n-1 \brack (n-1)-(k-1)}' + {n-1 \brack (n-1)-k}'\\
	&\leq (2n-k-1) \frac{(2n-k-1)^{2(k-1)}}{2^{k-1}(k-1)!} + \frac{(2n-k-2)^{2k}}{2^k k!}\\
	&\leq (2n-k-1) \frac{(2n-k-1)^{2(k-1)}}{2^{k-1}(k-1)!} + \frac{(2n-k-1)^{2k}}{2^k k!}\\
	&= \frac{1}{2^{k}k!} (2k + (2n - k -1)) (2n-k-1)^{2k-1}\\
	&\overset{\text{AM-GM}}{\leq} \frac{1}{2^{k}k!} \left(\frac{(2k + (2n-k -1)) + (2k-1)(2n-k-1)}{2k}\right)^{2k}\\
	&= \frac{(2n-k)^{2k}}{2^k k!} \\
	&\leq 2^k \left( \frac{n^{2k}}{k!} \right).
	\end{align*}
\end{proof}

\noindent For the remainder of this section, assume that $m \geq n^{2 + \alpha}$ for some $\alpha > 0$. For any injection $f \in S_{n,m}$, define the \emph{path support} of $f$ to be the subset of vertices of $[n]$ that belong to a path of $f \cup f_{\id}$. Let $S_{n,m,k} \subseteq S_{n,m}$ be the set of injections $f \in S_{n,m}$ such that $\text{d}(f) \in \classes_{n,k}$, i.e., $f \cup f_{\id}$ has exactly $n-k$ non-trivial paths. Let $S_{n,m,k,j} \subseteq S_{n,m,k}$ be the set of injections $f \in S_{n,m,k}$ with a path support of size $n-j$. Define the probabilities 
$$p_k := p_{n,m,k} = |S_{n,m,k}|/\ffac{m}{n} \quad \text{ and } \quad p_{k,j} := p_{n,m,k,j} = |S_{n,m,k,j}|/\ffac{m}{n},$$
so that $\sum_{j=0}^k p_{k,j} = p_k$ and $\sum_{k=0}^n p_k = 1$. In what follows, we have $j \leq k \leq n$, and we shall think of these probabilities $p_k$ and $p_{k,j}$ as being functions of $n$. Basic combinatorial reasoning shows that 
\[
|S_{n,m,k,j}| = \binom{n}{j}j!~{n-j \brack n-k}'\ffac{(m-n)}{n-k}, ~\text{ thus }~ p_{k,j} = \frac{{n-j \brack n-k}' \ffac{n}{j}~\ffac{(m-n)}{n-k}}{\ffac{m}{k} \cdot \ffac{(m-k)}{n-k}} \leq {n-j \brack n-k}'  \frac{\ffac{n}{j}}{\ffac{m}{k}}. 
\]
By Proposition~\ref{clm:bounds}, we have
$$ p_{k,j} \leq {n-j \brack n-k}' \frac{\ffac{n}{j}}{\ffac{m}{k}}  = {(n-j) \brack (n-j) - (k-j)}' \frac{\ffac{n}{j}}{\ffac{m}{k}}
\leq 2^{k-j}\frac{(n-j)^{2(k-j)}}{(k-j)!}  \frac{\ffac{n}{j}}{\ffac{m}{k}} \leq 4 \left( \frac{(n^2)^{k-j/2}}{(n^{2+\alpha}-n)^{k}} \right) .
$$
For sufficiently large $n$, it is immediate for any positive integer $k$ that $p_{k,j}$ is maximized when $j=0$. In particular, we have the following proposition. 
\begin{prop}\label{prop:small}
	Let $t$ be a positive integer. For all $k \geq t$ and $0 \leq j \leq k$, we have $p_{k,j} =  \mathrm{O}\left(  1/n^{t\alpha }  \right)$.
\end{prop}

\subsection{The Johnson Ordering of $\mathcal{A}_{n,m}$}

The \emph{Johnson scheme} $\mathcal{J}(m,n)$ is a symmetric association scheme defined over the $n$-subsets of $[m]$. The $i$th associate $A_i \in \mathcal{J}(m,n)$ of the Johnson scheme is defined such that $(A_i)_{X,Y} = 1$ if $n-|X \cap Y| =i$, and is 0 otherwise for any two $n$-subsets $X,Y$. It is well-known that the $i$th eigenspace of $\mathcal{J}(m,n)$ is isomorphic to the $S_m$-irrep associated to the partition $(m-i,i) \vdash m$. For proofs of these facts and more, see~\cite{GodsilMeagher}.
Henceforth, let $E_{i}$ be the primitive idempotent of the Johnson scheme that projects onto $\cV_{(m-i,i)}$.

There exists a natural ordering of the $S_{n,m}$ that we call \emph{the Johnson ordering} that shows the Johnson scheme is a quotient of $\mathcal{A}_{n,m}$.
First, we order $S_{n,m}$ by the corresponding $n$-subsets (the particular order does not matter). Next, we lexicographically order all $n!$ injections that map to the same $n$-subset (i.e., share the same image), e.g., for $n=3$, we have:
\begin{align*}
&(1,2,3),(1,3,2),(2,1,3),(2,3,1),(3,1,2),(3,2,1),(1,2,4),(1,4,2),(2,1,4),(2,4,1),(4,1,2),\cdots.
\end{align*}
Both $S_n$ and $S_m$ act on the domain and range of an injection respectively, and so each of their actions correspond to some collection of $ m^{\underline{n}}\times  m^{\underline{n}}$ permutation matrices (i.e., their corresponding permutation representations). The action of $S_m$ on $S_{m,n}$ is transitive, but $S_n$'s action has $\binom{m}{n}$ orbits, one for each $n$-subset. Note that on all $n!$ permutations of $S_n$ corresponding to any given $n$-subset, the action of $S_n$ corresponds to the regular representation of $S_n$.

Given $\lambda\vdash n$ and $\lambda'\vdash m$, let $E_\lambda$ and $E_{\lambda'}$ be the orthogonal projectors on the $\lambda$-isotypic and $\lambda'$-isotypic subspaces, respectively. Since the actions of $S_n$ and $S_m$ on $S_{n,m}$ commute, $E_\lambda$ and $E_{\lambda'}$ also commute, and we have $E_{\lambda\otimes\lambda'}=E_{\lambda}E_{\lambda'}$. From the specific way we ordered $S_{n,m}$ in the previous paragraph, for any $\lambda\vdash n$ we have
\[
E_\lambda=I_{\binom{m}{n}}\otimes F_\lambda = \underbrace{F_\lambda\oplus F_\lambda\oplus\cdots\oplus F_\lambda}_{\binom{m}{n}\text{ times}},
\]
where $F_\lambda$ is the $n!\times n!$ orthogonal projector on the $\lambda$-isotypic subspace of the regular representation of $S_n$.
Hence, we can write $E_{\lambda\otimes\lambda'}$ as a product of two block matrices:
\[
E_{\lambda\otimes\lambda'}=\left(
\begin{array}{ccc}
B_{1,1}^{\lambda'} & B_{1,2}^{\lambda'} & \cdots\\
 B_{2,1}^{\lambda'} & B_{2,2}^{\lambda'} \\
 \vdots & & \ddots
\end{array}
\right)
\left(
\begin{array}{ccc}
F_\lambda & 0 & \cdots\\
 0 & F_\lambda \\
 \vdots & & \ddots
\end{array}
\right)
=
\left(
\begin{array}{ccc}
B_{1,1}^{\lambda'}F_\lambda & B_{1,2}^{\lambda'}F_\lambda & \cdots\\
 B_{2,1}^{\lambda'}F_\lambda & B_{2,2}^{\lambda'}F_\lambda \\
 \vdots & & \ddots
\end{array}
\right)
,
\]
where the first matrix is $E_{\lambda'}$, in which each block $B_{i,j}^{\lambda'}$ is some $n!\times n!$ matrix. 
 {For every $\lambda\vdash n$, we have $E_{\lambda\otimes\lambda'}=E_{\lambda}E_{\lambda'}E_{\lambda}$, which means that, if the rank of $F_\lambda$ is $1$, then $E_{\lambda\otimes\lambda'}$ can be expressed as a tensor product with one of the factors being $F_\lambda$. When $\lambda=(n)$, we have $F_\lambda=F_{(n)}=J/n!$,} 
where $J$ is the $n!\times n!$ all-ones matrix. Thus, from the expression above, we have
\begin{align}\label{eq:johnson2}
E_{(n)\otimes(m-1,1)} = J/n! \otimes E_1 = \left(
\begin{array}{ccc}
b_{1,1}J & b_{1,2}J & \cdots\\
b_{2,1}J & b_{2,2}J \\
 \vdots & & \ddots
\end{array}
\right),
\end{align}
where $b_{i,j}$ are scalars, $1/\ffac{m}{n}$ times the dual eigenvalues $q_1(\cdot)$ (see Lemma~\ref{lem:dualeig} for explicit expressions).
Thus we have

\begin{align}\label{eq:johnson}
E_{(n)\otimes(m-1,1)}\circ E_{\lambda\otimes\lambda'} =
\left(
\begin{array}{ccc}
b_{1,1}B_{1,1}^{\lambda'}F_\lambda & b_{1,2}B_{1,2}^{\lambda'}F_\lambda & \cdots\\
b_{2,1}B_{2,1}^{\lambda'}F_\lambda & b_{2,2}B_{2,2}^{\lambda'}F_\lambda \\
 \vdots & & \ddots
\end{array}
\right),
\end{align}
which is orthogonal to $E_{\mu}$ (and thus $E_{\mu \otimes \mu'}$) for all $\mu \vdash n$ such that $\mu \neq\lambda$  {because of $F_\lambda F_\mu=0$}.

\subsection{The Dual Eigenvalues of $\mathcal{A}_{n,m}$}

It is well-known that the primitive idempotents $E_i$ of an association scheme can be written as a unique linear combination of associates $A_i$ of the scheme (see~\cite[Ch. 2.1]{GodsilAssoc}):
\begin{align*}
E_{i} &= \frac{1}{|X|} \sum_{j=0}^d q_{i}(j)A_{j}. 
\end{align*}
The $q_i(j)$'s are called the \emph{dual eigenvalues} of the scheme. In our case, this specializes to
\begin{align*}
E_{\lambda \otimes \lambda'} &= \frac{1}{m^{\underline{n}}} \sum_{(\mu|\rho)\in\classes_n} q_{\lambda \otimes \lambda'}( \mu | \rho  )A_{( \mu | \rho )}, 
\end{align*}
and these coefficients $q_{\lambda \otimes \lambda'}( \mu | \rho  )$ are the \emph{dual eigenvalues} of $\mathcal{A}_{m,n}$. 

\begin{prop}\label{prop:idem}\emph{\cite{CST}}
	For any finite symmetric Gelfand pair $(G,K)$, the dual eigenvalues $q_i(j)$ of the symmetric association scheme on $X = G/K$ can be written as
	\[
	q_i(j) = d_i\omega^i_j
	\]
	where $d_i$ is the dimension of irreducible $i$ corresponding to the spherical function $\omega^i$.
\end{prop}
Let us consider matrices in the Bose--Mesner algebra $\mathfrak{A}_{n,m}$ of the injection association scheme. By symmetry, every such matrix can be specified by a row or column corresponding to a single injection.  Note that
\[ \left(E_{\lambda \otimes \lambda'}\right)_{f,h} = \frac{q_{\lambda \otimes \lambda'}( \type{f,h}  )}{\ffac{m}{n}} \]
and that 
$$\sum_{f \in S_{n,m}} q_{\lambda \otimes \lambda'}( \type{f,h} )~1_f = \ffac{m}{n}\left(E_{\lambda \otimes \lambda'}\right)_{h} \in \lambda \otimes \lambda'$$
for all $\lambda \otimes \lambda'$ and $f,h \in S_{n,m}$,
where  $1_f\in\C[S_{n,m}]$ denotes the binary unit vector with the unique $1$ in position $f$.
It is well-known that the projector $E_{\lambda'}$ onto the $\lambda'$-isotypic component can be written as
\[ \left(E_{\lambda'}\right)_{f,h} = \frac{d_{\lambda'}}{m!} \sum_{\sigma\in S_m} \chi_{\lambda'}(\sigma^{-1})(V_\sigma)_{f,h}\]
where $V_\sigma:1_f\mapsto 1_{\sigma*f}$ for all $f,h \in S_{n,m}$
 {and $\chi_{\lambda'}$ is the character corresponding to $\lambda'$.}
 The foregoing, and the fact that $E_{\lambda'}E_{\lambda \otimes \lambda'} = E_{\lambda \otimes \lambda'}$ implies the following proposition.
\begin{prop}\label{prop:charSum}
For any $f,h\in S_{n,m}$, $\lambda\vdash n$, and $\lambda'\vdash m$, we have
\[
q_{\lambda\otimes\lambda'}\big(\type{f,h}\big) = \frac{d_{\lambda'}}{m!}
\sum_{\sigma\in\Sym_m} {\chi_{\lambda'}(\sigma)}q_{\lambda\otimes\lambda'}\big(\type{\sigma^{-1}*f,h}\big).
\]
\end{prop}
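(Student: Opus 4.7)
The plan is to compute both sides of the idempotent relation $E_{\lambda'}E_{\lambda\otimes\lambda'}=E_{\lambda\otimes\lambda'}$ (stated immediately before the proposition) at the single entry $(f,h)$, and simply read off the claimed identity.

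First, the right-hand side is, by the dual-eigenvalue formula, equal to $(E_{\lambda\otimes\lambda'})_{f,h}=q_{\lambda\otimes\lambda'}(\type{f,h})/\ffac{m}{n}$. For the left-hand side, I would expand the matrix product as a sum over an intermediate $f'\in S_{n,m}$ and substitute the displayed character formula for $(E_{\lambda'})_{f,f'}$ together with the dual-eigenvalue formula for $(E_{\lambda\otimes\lambda'})_{f',h}$, obtaining
\[
(E_{\lambda'}E_{\lambda\otimes\lambda'})_{f,h}=\frac{d_{\lambda'}}{m!\,\ffac{m}{n}}\sum_{f'\in S_{n,m}}\sum_{\sigma\in S_m}\chi_{\lambda'}(\sigma^{-1})(V_\sigma)_{f,f'}\,q_{\lambda\otimes\lambda'}(\type{f',h}).
\]

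The crucial step that collapses the double sum over $(f',\sigma)$ to a single sum over $\sigma$ is the observation that $V_\sigma$ is the permutation matrix of the action $1_f\mapsto 1_{\sigma*f}$, so $(V_\sigma)_{f,f'}=\delta_{f,\sigma*f'}=\delta_{f',\sigma^{-1}*f}$. Summing over $f'$ therefore leaves the single term with $f'=\sigma^{-1}*f$, after which cancelling the common factor of $\ffac{m}{n}$ between the two sides and invoking the real-valued-character identity $\chi_{\lambda'}(\sigma^{-1})=\chi_{\lambda'}(\sigma)$ (noted earlier in the paper) yields precisely the displayed equation.

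I do not expect any genuine obstacle here, as the argument is a direct unpacking of the definitions of $E_{\lambda'}$, $E_{\lambda\otimes\lambda'}$, and $V_\sigma$, together with the fact that the isotypic projector absorbs the finer primitive idempotent. The one small point that needs care is tracking the orientation of the $S_m$-action when translating $(V_\sigma)_{f,f'}$ into an index substitution on $f$, so that the argument of $\type{\cdot,h}$ ends up as $\sigma^{-1}*f$ (matching the statement) rather than $\sigma*f$; up to relabelling $\sigma\leftrightarrow\sigma^{-1}$ in the sum this is cosmetic, but it is worth getting right the first time.
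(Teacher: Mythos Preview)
Your proposal is correct and follows exactly the route the paper indicates: the proposition is stated as a direct consequence of $E_{\lambda'}E_{\lambda\otimes\lambda'}=E_{\lambda\otimes\lambda'}$ together with the entrywise formulas for $E_{\lambda'}$ and $E_{\lambda\otimes\lambda'}$, and your computation unpacks precisely that. The orientation bookkeeping for $(V_\sigma)_{f,f'}$ and the use of $\chi_{\lambda'}(\sigma^{-1})=\chi_{\lambda'}(\sigma)$ are handled correctly.
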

\begin{proof}
 {
We have $\chi_{\lambda'}(\sigma^{-1})=\chi_{\lambda'}(\sigma)$.
By applying $E_{\lambda'}$ to $\ffac{m}{n}\left(E_{\lambda \otimes \lambda'}\right)_{h}$, we get
\begin{align*}
\ffac{m}{n}\left(E_{\lambda'}E_{\lambda \otimes \lambda'}\right)_{h}
 & = \sum_{f \in S_{n,m}} \frac{d_{\lambda'}}{m!} \sum_{\sigma\in S_m} \chi_{\lambda'}(\sigma) q_{\lambda \otimes \lambda'}( \type{f,h} )~1_{\sigma * f}
 \\ & = \sum_{f \in S_{n,m}} \frac{d_{\lambda'}}{m!} \sum_{\sigma\in S_m} \chi_{\lambda'}(\sigma) q_{\lambda \otimes \lambda'}( \type{\sigma^{-1}*f,h} )~1_{f}.
\end{align*}
The proposition follows by equating the coefficients of $1_{f}$ in the expressions for $\ffac{m}{n}\left(E_{\lambda \otimes \lambda'}\right)_{h}$ and $\ffac{m}{n}\left(E_{\lambda'}E_{\lambda \otimes \lambda'}\right)_{h}$.
}
\end{proof}

\begin{lem}\label{lem:dualeig}
	Let $q_i(j)$ be a dual eigenvalue of the Johnson scheme $\mathcal{J}(m,n)$. Then we have
	\[  q_1(j) = \frac{\binom{m}{n}}{\binom{m-2}{n-1}} \left(  n- j - \frac{n^2}{m}   \right). \]
	Moreover, if $m \geq n^2$, then $q_1(j) \geq 0$ for all $j \neq n$.
\end{lem}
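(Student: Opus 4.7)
Both parts of the lemma reduce to understanding the primitive idempotent $E_1$ of the Johnson scheme $\mathcal{J}(m,n)$. That idempotent projects onto the $(m-1,1)$-isotypic component of $\mathbb{C}\bigl[\binom{[m]}{n}\bigr]$, which is spanned by the coordinate functions $g_k(X) := \mathbf{1}[k \in X] - n/m$ for $k \in [m]$. A direct expansion gives
$$\sum_{k=1}^{m} g_k(X)\,g_k(Y) \;=\; |X \cap Y| - \frac{n^2}{m}.$$
Because $E_1$ is $S_m$-invariant, its entries depend only on $|X \cap Y|$; and because it projects onto $\mathrm{span}\{g_k : k \in [m]\}$, there must exist a scalar $c$ with $(E_1)_{X,Y} = c\bigl(|X \cap Y| - n^2/m\bigr)$. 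The trace condition $\tr E_1 = m-1$ fixes $c$ uniquely. Multiplying by $\binom{m}{n}$, substituting $|X \cap Y| = n - j$, and using the identity $\tfrac{m(m-1)}{n(m-n)} = \tfrac{\binom{m}{n}}{\binom{m-2}{n-1}}$, one reads off the claimed formula for $q_1(j)$.

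For the non-negativity of $q_i(j)$ when $m \geq n^2$ and $j \neq n$, the plan is to pass to the ordinary eigenvalues via the identity $q_i(j) = d_i\,p_j(i)/v_j$, which follows from computing $\tr(A_j E_i) = p_j(i)\,d_i = q_i(j)\,v_j$ in two ways. Since $d_i, v_j > 0$, it suffices to prove $p_j(i) \geq 0$. I would invoke the Eberlein polynomial expression
$$p_j(i) \;=\; \sum_{h=0}^{j} (-1)^h \binom{i}{h}\binom{n-i}{j-h}\binom{m-n-i}{j-h},$$
and view it as a polynomial in $m$ for fixed $n, i, j$. When $i + j \leq n$, the $h = 0$ summand contributes a strictly positive leading coefficient $\binom{n-i}{j}$ in its polynomial-in-$m$ expansion, and the hypothesis $m \geq n^2$ swamps the alternating-sign remainder, yielding $p_j(i) > 0$.

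The genuinely delicate regime is $i + j > n$, where the $h = 0$ term vanishes and the leading behavior in $m$ comes from the first surviving summand. Here the plan is to exploit the $P$-polynomial structure of $\mathcal{J}(m,n)$: since $A_j$ is a polynomial of degree $j$ in $A_1$, the values $p_j(i)$ satisfy a three-term recurrence in $j$ starting from the already computed $p_0(i) = 1$ and $p_1(i)$, which lets one propagate non-negativity inductively; equivalently, one can realize $p_j(i)$ as a suitably normalized Hahn polynomial evaluation whose sign structure on $\{0,1,\ldots,n-1\}$ is classically controlled. The main obstacle will be verifying that $m \geq n^2$ is uniformly sufficient in this regime rather than merely for small $(i,j)$; tightness of this threshold is already visible in the boundary case $i=1,\,j=n-1$, where a short direct computation gives $p_{n-1}(1) = \binom{m-n-1}{n-2}(m-n^2)/(n-1)$, which is non-negative if and only if $m \geq n^2$.
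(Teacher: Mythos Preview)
Your derivation of the formula for $q_1(j)$ is correct and more transparent than the paper's. The paper simply quotes the eigenvalue matrix $P$ of $\mathcal{J}(m,n)$ and appeals to $Q=\binom{m}{n}P^{-1}$, asserting that the inversion yields the stated expression. You instead identify the $(m-1,1)$-isotypic explicitly via the centered indicators $g_k$, observe that $\sum_k g_k(X)g_k(Y)=|X\cap Y|-n^2/m$ is (by Schur's lemma on an irreducible) a scalar multiple of $E_1$, and pin down the scalar from $\tr E_1=m-1$. That is a genuinely different and self-contained route, and it makes the structure of the answer visible rather than hiding it in a matrix inverse.

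The second half, however, rests on a misreading. The ``$q_i(j)$'' in the ``Moreover'' clause is a typo for $q_1(j)$; the paper's own proof confirms this, as it concludes directly from the displayed formula that $q_1(j)\ge 0$ when $m\ge n^2$ and $j\ne n$ (indeed $n-j\ge 1\ge n^2/m$). You instead attempt to prove the general assertion $q_i(j)\ge 0$ for all $i$, which is false. Using your own identity $q_i(j)=d_i\,p_j(i)/v_j$, take $\mathcal{J}(m,2)$: the eigenvalues of $A_1$ are $2(m-2),\,m-4,\,-2$, so $p_1(2)=-2<0$ for every $m\ge 4=n^2$, giving $q_2(1)<0$ with $j=1\ne n$. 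Hence the Eberlein/Hahn program you outline cannot succeed, and the delicate $i+j>n$ analysis is aimed at a target that does not exist. Once you read the clause as $q_1(j)$, the ``Moreover'' part is a one-line consequence of your own formula and needs none of that machinery.
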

\begin{proof}
	Let $p_{i}(j)$ denote the $j$-th eigenvalue of the $i$-th associate of the Johnson scheme $\mathcal{J}(m,n)$. It is well-known (see~\cite{GodsilMeagher} for example) that
	\[ p_{i}(j) = \sum_{r=i}^n (-1)^{(r-i+j)} \binom{r}{i} \binom{m-2r}{n-r} \binom{m-r-j}{r-j}. \]
	 {Using basic relations between primal and dual eigenvalues (see~\cite{GodsilAssoc}), we can write the first primitive idempotent $E_1$ of the Johnson scheme as follows:
	\begin{align*}
	E_1 &= \frac{1}{\binom{m}{n}}\sum_{j=0}^n q_1(j) A_j\\ 
	&= \frac{m-1}{\binom{m}{n}} \sum_{j=0}^n\frac{ p_j(1)}{\binom{n}{j}\binom{m-n}{j}} A_j\\
	&=\frac{m-1}{\binom{m}{n}}\sum_{j=0}^n \frac{ 1 }{\binom{n}{j}\binom{m-n}{j} } \left(\binom{n-1}{j} \binom{m-n-1}{j} -
	\binom{n-1}{j-1} \binom{m-n-1}{j-1} \right)A_j\\
	&=\frac{m-1}{\binom{m}{n}} \sum_{j=0}^n \left( 1 -\frac{mj}{n(m-n) } \right)A_j\\
	&=\frac{1}{\binom{m-2}{n-1}} \sum_{j=0}^n \left(n-j - \frac{n^2}{m} \right)A_j.
	\end{align*}
	Equating coefficients of $A_j$ gives the result.}
%
%
\end{proof}
\begin{lem}\label{lem:dualeig2}
	For all $\mu\in\classes_{n,k}$, we have $$q_{(n)\otimes(m-1,1)}(\mu)=\frac{(k m-n^2)(m-1)}{n(m-n)}.$$
\end{lem}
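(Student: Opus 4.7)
The plan is to reduce this computation to Lemma~4.4 on the dual eigenvalues of the Johnson scheme $\mathcal{J}(m,n)$, by exploiting the fact that $E_{(n)\otimes(m-1,1)}$ factors neatly in the Johnson ordering of $\mathcal{A}_{n,m}$. In particular, the discussion around Equation~(4.1) shows that in the Johnson ordering of $S_{n,m}$, $E_{(n)\otimes(m-1,1)} = F_{(n)} \otimes E_1 = \frac{J_{n!}}{n!} \otimes E_1$, where $E_1$ is the primitive idempotent of $\mathcal{J}(m,n)$ projecting onto $\cV_{(m-1,1)}$. Consequently, for any $f,h \in S_{n,m}$ with respective images $X=\mathrm{image}(f)$ and $Y=\mathrm{image}(h)$,
\[
(E_{(n)\otimes(m-1,1)})_{f,h} \;=\; \frac{1}{n!}(E_1)_{X,Y} \;=\; \frac{1}{n!\binom{m}{n}} q_1(|X \triangle Y|/2),
\]
where $q_1(j)$ is the Johnson dual eigenvalue at distance $j=n-|X\cap Y|$. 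Using the identity $(E_{\lambda\otimes\lambda'})_{f,h} = q_{\lambda\otimes\lambda'}(\type{f,h})/\ffac{m}{n}$ and $\ffac{m}{n}=n!\binom{m}{n}$, I obtain
\[
q_{(n)\otimes(m-1,1)}(\type{f,h}) \;=\; q_1(n-|X\cap Y|).
\]

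Next I would show that $|X\cap Y| = k$ whenever $\type{f,h}=(\lambda|\rho) \in \classes_{n,k}$, i.e., whenever $\ell(\rho)=n-k$. This is a short combinatorial observation on the multigraph $G(f,h)$: since every left-side vertex in $[n]$ has degree exactly two, all paths must begin and end on the right side $[m]$, and a path of length $2\rho_i$ has $\rho_i+1$ right-side vertices, of which exactly two (the endpoints) lie in only one of the two images. Likewise, each cycle of length $2\lambda_i$ contributes $\lambda_i$ right-side vertices all in $X\cap Y$. Summing, $|X\cap Y|=\sum_i \lambda_i + \sum_i(\rho_i-1) = |\lambda|+|\rho|-\ell(\rho) = n-\ell(\rho) = k$, verifying my claim. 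One can sanity-check this against the three examples in Figure~2 (where the intersections are $2,2,0$ corresponding to $k=2,2,0$).

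Substituting $j = n - k$ into Lemma~4.4 and using $\binom{m}{n}/\binom{m-2}{n-1} = m(m-1)/(n(m-n))$ gives
\[
q_{(n)\otimes(m-1,1)}(\mu) \;=\; q_1(n-k) \;=\; \frac{m(m-1)}{n(m-n)} \cdot \frac{km-n^2}{m} \;=\; \frac{(km-n^2)(m-1)}{n(m-n)},
\]
as desired. The only non-routine step is the graph-theoretic counting of $|X\cap Y|$ in terms of the cycle-path type, and even this is straightforward once one notices that all path endpoints must lie on the right side of the bipartite graph $K_{n,m}$.
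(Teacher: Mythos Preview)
Your proof is correct and follows essentially the same approach as the paper: both use the factorization $E_{(n)\otimes(m-1,1)}=\tfrac{J}{n!}\otimes E_1$ in the Johnson ordering, invoke Lemma~4.4 for the Johnson dual eigenvalue $q_1(j)$, and substitute $j=n-|\mathrm{im}\,f\cap\mathrm{im}\,h|=n-k$. Your argument is in fact slightly more complete, since you explicitly verify from the structure of the bipartite multigraph $G(f,h)$ that $|\mathrm{im}\,f\cap\mathrm{im}\,h|=n-\ell(\rho)=k$, whereas the paper takes this as understood from the definition of $\classes_{n,k}$.
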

\begin{proof}
	Recall that the $i$th eigenspace of the Johnson scheme $\mathcal{J}(m,n)$ is isomorphic to the $S_m$-irrep associated to the partition $(m-i,i) \vdash m$, and that $E_{i}$ denotes the primitive idempotent of the Johnson scheme that projects onto its $(m-i,i)$ eigenspace. 
	 {In the proof of the previous lemma we saw that}
	\begin{align*}
	E_{1} &= \frac{1}{\binom{m}{n}} \sum_{j=0}^n \left[   \frac{\binom{m}{n}}{\binom{m-2}{n-1}} \left(  n- j - \frac{n^2}{m}   \right) \right]A_j = \frac{1}{\binom{m-2}{n-1}} \sum_{j=0}^n \left(  n- j - \frac{n^2}{m}   \right) A_j.
	\end{align*}
	Since $(A_j)_{X,Y} = 1$ only if $|X \cap Y| = n - j$, we have the dual eigenvalue
	\[\left(\ffac{m}{n}E_{(n)\otimes(m-1,1)} \right)_{f,h}
	=  \ffac{m}{n}\frac{|\text{im}~f\cap \text{im}~h| - n^2/m}{n! \binom{m-2}{n-1}}
	=\frac{(m-1)(|\text{im}~f \cap \text{im}~h|m - n^2)}{n(m-n)}.
	\]
 {	
This value is clearly the same for all pairs of $f$ and $h$ that have the same $|\text{im}~f\cap \text{im}~h|$, in other words, for all pairs of $f$ and $h$ for which the cycle-path type $\type{f,h}$ is in the same $\classes_{n,k}$. Therefore, it}	
	follows that
	$$q_{(n)\otimes(m-1,1)}(\mu)=\frac{(k m-n^2)(m-1)}{n(m-n)}$$ 
	for all $\mu\in\classes_{n,k}$, which completes the proof.
\end{proof}

\section{A sufficient condition on Krein parameters} \label{sec:AdvToKrein}
In this section, we reduce the proof of Theorem~\ref{thm:main} to an upper bound on certain Krein parameters of $\mathcal{A}_{n,m}$. Recall that $\circ$ denotes the Schur (entrywise) product of two matrices.
\begin{definition}[Krein Parameters]
	Let $\mathcal{A}$ be an association scheme on $v$ vertices with $d$ associates. For any $0 \leq i,j \leq d$, there exist constants $q_{i,j}(k)$ such that 
	\[E_i \circ E_j = \frac{1}{v} \sum_{k=0}^d q_{i,j}(k) E_k,\]
	which are called the \emph{Krein parameters} of $\mathcal{A}$. More explicitly, we have 
	\[q_{i,j}(k) = v \frac{\Tr{E_k(E_i \circ E_j)}}{d_k}.\]
\end{definition}
\noindent The Krein parameters can alternatively be written as
\begin{equation}\label{eq:Krein}
q_{i,j}(k)  = \frac{1}{vd_k} \sum_{\ell=0}^d \frac{q_i(\ell)q_j(\ell) \overline{q_k(\ell)}}{v_\ell} =  \frac{d_id_j}{v} \sum_{\ell=0}^d \frac{\overline{p_i(\ell)} \overline{p_j(\ell)} p_k(\ell)}{v_\ell^2}, 
\end{equation}
where $p_i(j)$ denotes the $j$-th eigenvalue of $A_i$ and $q_i(j)$ denotes the $j$th dual eigenvalue of $E_i$ (see~\cite[Chap. 2.4]{GodsilAssoc} for a proof).

To prove the lower bound on non-coherent \textsc{Index Erasure}, we use the same adversary matrix $\Gamma$ as \cite{AMRR11} used for the coherent case.%
\footnote{Technically, the adversary matrix used here is $\sqrt{n}$ times that of \cite{AMRR11} as the adversary method they use places slightly different conditions on the adversary matrix.}
 For simplifying the equations, without loss of generality let us assume that $n$ is a square. As in \cite{AMRR11}, we choose
\[
\Gamma := \sum_{k=0}^{\sqrt{n}-1} (\sqrt{n}-k)\sum_{\lambda\vdash k}E_{(n-k,\lambda)\otimes(m-k,\lambda)},
\]
and thus the orthogonal projection onto its image is
$$\Pi_\Gamma := \sum_{\lambda\colon|\lambda| < \sqrt{n}}E_{(n-|\lambda|,\lambda)\otimes(m-|\lambda|,\lambda)}.$$

Note that the sole principal eigenvector $\omega$ of $\Gamma$ is the uniform superposition over $\Dom$ (i.e., $\omega_f=1/\sqrt{\ffac{m}{n}}$ for all $f\in\Dom$).
Thus, as per Corollary~\ref{cor:adv}, we are interested in the quantity
\[
\eta = \max_{T\in\junkSet} \Tr{\Pi_\Gamma\frac{\left(T\circ T^\odot\right)}{\ffac{m}{n}}}.
\]
 {As described by the automorphism principle (Theorem~\ref{thm:auto}), here it suffices to consider $T$ that are $(S_n\times S_m)$-invariant, that is, $T$ that belong the Bose--Mesner algebra $\mathfrak{A}_{n,m}$. Because of that, for simplicity, let us redefine to $\junkSet$ be the set of all state matrices in $\mathfrak{A}_{n,m}$.}

For any primitive idempotent $E_{\lambda \otimes \lambda'}$, let 
\[ T_{\lambda\otimes\lambda'} := \left( \frac{\ffac{m}{n}}{\tr E_{\lambda \otimes \lambda'}} \right) E_{\lambda \otimes \lambda'} = \left( \frac{\ffac{m}{n}}{d_{\lambda \otimes \lambda'}} \right) E_{\lambda \otimes \lambda'}\]
be its associated state matrix. In~\cite{AMRR11} it is shown that the target matrix can be written as
\[
T^\odot=\frac{n}{m}T_{(n)\otimes(m)}
+ \left(1-\frac{n}{m}\right)T_{(n)\otimes(m-1,1)}.
\]
In the coherent case, recall that $T=J$, and therefore 
\[
\eta
=\Tr{\Pi_\Gamma \frac{T^\odot}{\ffac{m}{n}}}
=\frac{n}{m}\Tr{\frac{T_{(n)\otimes(m)}}{\ffac{m}{n}}}
=\frac{n}{m}.
\]
The most technically involved part of the proof of the lower bound by \cite{AMRR11} is proving that $\|\Delta_x\circ\Gamma\| =  \OO(1)$.
Since we are using the same adversary matrix $\Gamma$, we already have the above bound on
$\|\Delta_x\circ\Gamma\|$.
Our goal is to show that
$\Tr{\Pi_\Gamma(T\circ T^\odot)/\ffac{m}{n}}$ is small for \emph{all} state matrices $T$.

By dividing the elements in the set $\junkSet$ by $\ffac{m}{n}$ we obtain the set of all density matrices (positive-semidefinite Hermitian matrices with trace 1) of the Bose--Mesner algebra $\mathfrak{A}_{n,m}$. Observe that $(T \circ T')/\ffac{m}{n}$ is a density matrix for all $T,T' \in \junkSet$. For any $\junk\in\junkSet$, we have
\[
\Tr{\Pi_\Gamma\frac{\left(T\circ T_{(n)\otimes(m)}\right)}{\ffac{m}{n}}}
= \Tr{\Pi_\Gamma\frac{T}{\ffac{m}{n}}} \le 1,
\]
therefore
\[
\Tr{\Pi_\Gamma\frac{\left(T\circ T^\odot\right)}{\ffac{m}{n}}}
\le \frac{n}{m} + \left(1-\frac{n}{m}\right)\Tr{\Pi_\Gamma\frac{\left(T\circ T_{(n)\otimes(m-1,1)}\right)}{\ffac{m}{n}}}.
\]
Our goal is to bound the latter term:
\begin{align}\label{eq:termToBound}
\left(1-\frac{n}{m}\right)\Tr{\Pi_\Gamma\frac{\left(T\circ T_{(n)\otimes(m-1,1)}\right)}{\ffac{m}{n}}}.
\end{align}
Note that
$
\junkSet = \left\{\sum_\chi c_\lambda\junk_\chi
\colon \sum_\chi c_\chi=1 \;,  \; c_\chi\ge 0 \right\},
$
where the sums range over $\chi\in\text{Irr}(S_{n,m})$. Hence,
\begin{align*}
\max_{\junk\in\junkSet}\Tr{\Pi_\Gamma\frac{\left(\junk\circ T_{(n)\otimes(m-1,1)}\right)}{\ffac{m}{n}}}
& =
\max_{\substack{\{c_\chi\ge 0\}_\chi\\\sum_\chi c_\chi=1}} c_\chi\Tr{\Pi_\Gamma\frac{\left(\junk_\chi\circ T_{(n)\otimes(m-1,1)}\right)}{\ffac{m}{n}}}
\\ & =
\max_{\chi}\Tr{\Pi_\Gamma\frac{\left(\junk_{\chi}\circ T_{(n)\otimes(m-1,1)}\right)}{\ffac{m}{n}}}.
\end{align*}
The following proposition simplifies~(\ref{eq:termToBound}).
\begin{prop} For any $\lambda=(n-|\nu|,\nu)$ and $\bar\lambda = (m-|\nu|,\nu)$, we have 
	\[
		\Tr{\Pi_\Gamma\frac{\left(\junk_{\lambda\otimes\lambda'}\circ T_{(n)\otimes(m-1,1)}\right)}{\ffac{m}{n}}}
		=\Tr{E_{\lambda\otimes\bar\lambda}\frac{\left(\junk_{\lambda\otimes\lambda'}\circ T_{(n)\otimes(m-1,1)}\right)}{\ffac{m}{n}}}.
	\]
	Moreover, if $|\nu| \geq \sqrt{n}$, then
	$
		\Tr{\Pi_\Gamma\frac{\left(\junk_{\lambda\otimes\lambda'}\circ T_{(n)\otimes(m-1,1)}\right)}{\ffac{m}{n}}} = 0.
	$
\end{prop}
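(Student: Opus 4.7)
The plan is to exploit equation~(\ref{eq:johnson}), which controls the $S_n$-isotypic structure of the Schur product $E_{(n)\otimes(m-1,1)}\circ E_{\lambda\otimes\lambda'}$. That equation realizes this Schur product, under the Johnson ordering of $S_{n,m}$, as a block matrix whose $(i,j)$-block is $b_{i,j}B_{i,j}^{\lambda'}F_\lambda$; in particular, it is orthogonal in the trace inner product to $E_\mu$, and hence to every $E_{\mu\otimes\mu'}$, for all $\mu\vdash n$ with $\mu\ne\lambda$. Since $T_{(n)\otimes(m-1,1)}$ and $T_{\lambda\otimes\lambda'}$ are positive scalar multiples of the corresponding primitive idempotents, the same orthogonality passes to $T_{\lambda\otimes\lambda'}\circ T_{(n)\otimes(m-1,1)}$. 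Concretely,
\[
\Tr{E_{\mu\otimes\mu'}\bigl(T_{\lambda\otimes\lambda'}\circ T_{(n)\otimes(m-1,1)}\bigr)}=0
\qquad\text{for every }\mu\vdash n,\ \mu\ne\lambda,\ \text{and every admissible }\mu'.
\]

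Next, I would expand the projector in the left-hand side of the proposition as
\[
\Pi_\Gamma=\sum_{\mu:\,|\mu|<\sqrt{n}}E_{(n-|\mu|,\mu)\otimes(m-|\mu|,\mu)},
\]
and substitute this into the trace. By the orthogonality just established, every summand whose $S_n$-part $(n-|\mu|,\mu)$ differs from $\lambda$ contributes zero. Writing $\lambda=(n-|\nu|,\nu)$, the only surviving index is $\mu=\nu$, and (under the implicit regime $|\nu|<\sqrt{n}$ in which this proposition is applied in Section~\ref{sec:AdvToKrein}) the corresponding summand is precisely $E_{(n-|\nu|,\nu)\otimes(m-|\nu|,\nu)}=E_{\lambda\otimes\bar\lambda}$. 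This yields the claimed identity.

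I do not anticipate a real obstacle: the proposition is essentially a bookkeeping corollary of equation~(\ref{eq:johnson}), and the argument is purely algebraic. The one subtlety worth addressing is the boundary case $|\nu|\ge\sqrt{n}$, where $E_{\lambda\otimes\bar\lambda}$ is not among the summands of $\Pi_\Gamma$; then the left-hand side vanishes by the same orthogonality argument, and the proposition is invoked in the main proof only after restricting to $\lambda$ with fewer than $\sqrt{n}$ cells below the first row (via Corollary~\ref{cor:low}), so this case does not arise. An alternative, essentially equivalent route would be to write $E_{(n)\otimes(m-1,1)}=\tfrac{1}{n!}J_n\otimes E_1$ in the Johnson ordering and observe directly that Schur multiplication by a matrix constant on the $n!\times n!$ Johnson blocks commutes with the $S_n$-isotypic projection; this makes the invariance of the $S_n$-isotypic component under the Schur product transparent and bypasses the block-by-block manipulation.
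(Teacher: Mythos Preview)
Your proposal is correct and follows essentially the same approach as the paper: invoke equation~(\ref{eq:johnson}) to see that $T_{\lambda\otimes\lambda'}\circ T_{(n)\otimes(m-1,1)}$ is annihilated by every $E_{\mu\otimes\mu'}$ with $\mu\ne\lambda$, expand $\Pi_\Gamma$ as a sum of minimal-irrep projectors, and observe that only the $\mu=\nu$ term survives. The only minor difference is in the boundary case $|\nu|\ge\sqrt n$: the paper asserts both sides vanish, whereas you (more cautiously) note the left side vanishes and that this case is never invoked downstream; either reading suffices for the application.
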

\begin{proof}
By Equation~(\ref{eq:johnson}), if $|\nu| < \sqrt{n}$, then we have
\begin{align*}
\Pi_\Gamma \frac{\left(\junk_{\lambda\otimes\lambda'}\circ T_{(n)\otimes(m-1,1)}\right)}{\ffac{m}{n}}
&= \sum_{\mu \colon|\mu| < \sqrt{n}}E_{(n-|\mu|,\mu)\otimes(m-|\mu|,\mu)} \frac{\left(\junk_{\lambda\otimes\lambda'}\circ T_{(n)\otimes(m-1,1)}\right)}{\ffac{m}{n}}\\
&= E_{(n-|\nu|,\nu)\otimes(m-|\nu|,\nu)} \frac{\left(\junk_{\lambda\otimes\lambda'}\circ T_{(n)\otimes(m-1,1)}\right)}{\ffac{m}{n}}\\
&= E_{\lambda \otimes \bar \lambda} \frac{\left(\junk_{\lambda\otimes\lambda'}\circ T_{(n)\otimes(m-1,1)}\right)}{\ffac{m}{n}}.
\end{align*}
The second part of the proof is immediate from the definition of $\Gamma$.
\end{proof}
\noindent The proposition above now allows us to bound (\ref{eq:termToBound}) for all $\lambda \vdash n$ such that $|\lambda|-\lambda_1 \leq \sqrt{n}$.
\begin{cor}\label{cor:lambdabar}
Suppose $\lambda\vdash n$ has no more than $\sqrt{n}$ cells below the first row. Then for all $\lambda'\neq\bar\lambda$ we have
\[
\tr\left[
E_{\lambda \otimes \overline{\lambda}}
 \frac{\left(T_{(n) \otimes (m-1,1)}\circ T_{\lambda \otimes \lambda'}\right)}{\ffac{m}{n}} 
\right]
\in \OO(\sqrt{n}/m).
\]
\end{cor}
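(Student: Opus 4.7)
The plan is to avoid computing the relevant Krein parameter of $\mathcal{A}_{n,m}$ directly and instead to bound the individual trace by recognizing it as one term in a positive-term sum whose total is easily computed. The cornerstone of the proof is the Schur-product identity
\[
E_{(n)\otimes(m-1,1)} \circ E_\lambda \;=\; \frac{m-1}{\ffac{m}{n}}\,E_\lambda,
\]
which I would verify using the Johnson ordering: in that basis, $E_\lambda = I_{\binom{m}{n}} \otimes F_\lambda$ is supported on the diagonal blocks, whereas $E_{(n)\otimes(m-1,1)} = (J_{n!}/n!)\otimes E_1^{\mathrm{Johnson}}$ is constant on every Johnson block. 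The entrywise product therefore vanishes off the diagonal, and on a diagonal block $(I,I)$ it equals $b_{I,I}\,F_\lambda$ with $b_{I,I} = (m-1)/\ffac{m}{n}$---the diagonal value confirmed by Lemma~\ref{lem:dualeig2} evaluated at $k = n$.

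Given this identity, summing the multiplicity-free decomposition $E_\lambda = \sum_{\lambda'} E_{\lambda\otimes\lambda'}$ (over all $\lambda' \vdash m$ with $\lambda'/\lambda$ a horizontal strip) yields
\[
\sum_{\lambda'} E_{(n)\otimes(m-1,1)} \circ E_{\lambda\otimes\lambda'} \;=\; \frac{m-1}{\ffac{m}{n}}\,E_\lambda.
\]
Pre-multiplying by $E_{\lambda\otimes\bar\lambda}$, using the relation $E_{\lambda\otimes\bar\lambda}E_\lambda = E_{\lambda\otimes\bar\lambda}$, and taking the trace gives $\sum_{\lambda'}\tr[E_{\lambda\otimes\bar\lambda}\,(E_{(n)\otimes(m-1,1)} \circ E_{\lambda\otimes\lambda'})] = (m-1)\,d_{\lambda\otimes\bar\lambda}/\ffac{m}{n}$. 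By the Schur product theorem each $E_{(n)\otimes(m-1,1)} \circ E_{\lambda\otimes\lambda'}$ is positive semidefinite, and tracing a PSD matrix against the projector $E_{\lambda\otimes\bar\lambda}$ is nonnegative, so every individual summand is bounded by the total.

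Rescaling by the scalar factors $\ffac{m}{n}/(m-1)$ and $\ffac{m}{n}/d_{\lambda\otimes\lambda'}$ that convert $E_{(n)\otimes(m-1,1)}$ and $E_{\lambda\otimes\lambda'}$ into their corresponding state matrices, the bound telescopes to $d_{\lambda\otimes\bar\lambda}/d_{\lambda\otimes\lambda'} = d_{\bar\lambda}/d_{\lambda'}$; Corollary~\ref{cor:lambdabarPre} then supplies $d_{\bar\lambda}/d_{\lambda'} \in \OO(\sqrt{n}/m)$ for every $\lambda' \neq \bar\lambda$, exactly under the hypothesis that $\lambda$ has at most $\sqrt{n}$ cells below its first row. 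The main conceptual obstacle is spotting this averaging trick: attacking the Krein parameter $q_{(n)\otimes(m-1,1),\,\lambda\otimes\lambda'}(\lambda\otimes\bar\lambda)$ via the triple-sum formula~(\ref{eq:Krein}) yields an unwieldy expression, and the crude general bound $q_{i,j}(k) \le d_i d_j/d_k$ falls short of the target by a factor of $\Omega(m/\sqrt{n})$, so one really must exploit the block-constant structure of $E_{(n)\otimes(m-1,1)}$ in the Johnson ordering to close the gap.
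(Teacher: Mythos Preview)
Your proof is correct, and it reaches the same bound $d_{\bar\lambda}/d_{\lambda'}$ as the paper before invoking Corollary~\ref{cor:lambdabarPre}, but the route is genuinely different. The paper's argument never computes the Johnson-block structure of $E_{(n)\otimes(m-1,1)}$; instead it exploits the symmetry of the entrywise sum $\mathrm{sum}[A\circ B\circ C]$ under permuting the three factors. Concretely, it rewrites the trace as $\frac{1}{d_{\lambda\otimes\lambda'}}\,\mathrm{sum}\bigl[E_{\lambda\otimes\bar\lambda}\circ T_{(n)\otimes(m-1,1)}\circ E_{\lambda\otimes\lambda'}\bigr]$, swaps $E_{\lambda\otimes\bar\lambda}$ with $E_{\lambda\otimes\lambda'}$, and reinterprets the result as $\frac{d_{\lambda\otimes\bar\lambda}}{d_{\lambda\otimes\lambda'}}\,\tr\bigl[E_{\lambda\otimes\lambda'}\,\frac{T_{(n)\otimes(m-1,1)}\circ T_{\lambda\otimes\bar\lambda}}{\ffac{m}{n}}\bigr]$; the latter trace is at most $1$ because a density matrix is traced against a projector. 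Your argument instead establishes the exact identity $E_{(n)\otimes(m-1,1)}\circ E_\lambda=\frac{m-1}{\ffac{m}{n}}E_\lambda$ (which is indeed correct: the diagonal Johnson block carries the constant $q_{(n)\otimes(m-1,1)}(\mu)=m-1$ for every $\mu\in\classes_{n,n}$), decomposes $E_\lambda=\sum_{\lambda'}E_{\lambda\otimes\lambda'}$, and bounds a single nonnegative summand by the total. The paper's swap trick is marginally quicker and needs no block computation; your approach yields an exact sum rule that may be of independent interest and makes transparent \emph{why} the denominator $d_{\lambda\otimes\lambda'}$ is what wins the bound.
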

\begin{proof}
Let $\mathrm{sum}[\cdot]$ denote the sum of the entries of the matrix. 
 {We have $T_{\lambda \otimes \lambda''}/\ffac{m}{n}=E_{\lambda \otimes \lambda''}/d_{\lambda\otimes\lambda''}$ for all $\lambda''\vdash m$. Therefore}
\begin{align*}
\tr\left[
E_{\lambda \otimes \overline{\lambda}}
 \frac{\left(T_{(n) \otimes (m-1,1)}\circ T_{\lambda \otimes \lambda'}\right)}{\ffac{m}{n}} 
\right]
&= 
\frac{1}{d_{\lambda\otimes\lambda'}} \text{sum}\left[
 E_{\lambda \otimes\overline{\lambda}}\circ T_{(n) \otimes (m-1,1)} \circ E_{\lambda \otimes \lambda'}
\right]
\\
&= 
\frac{d_{\lambda\otimes\overline{\lambda}}}{d_{\lambda\otimes\lambda'}}
\tr\left[
E_{\lambda \otimes \lambda'}
 \frac{\left(T_{(n) \otimes (m-1,1)}\circ T_{\lambda \otimes \overline{\lambda}}\right)}{\ffac{m}{n}} 
\right].\\
\intertext{Since $\left(T_{(n) \otimes (m-1,1)}\circ T_{\lambda \otimes \overline{\lambda}}\right)/\ffac{m}{n}$ is a density matrix, this trace is at most $1$, thus}
& \leq \frac{d_{\overline{\lambda}}}{d_{\lambda'}} \in \OO(\sqrt{n}/m),
\end{align*} 
where the asymptotic bound follows from Corollary~\ref{cor:lambdabarPre}, completing the proof.
\end{proof}
\noindent We therefore have
\[
\eta = \OO
\left(\frac{n}{m} + 
\Tr{E_{\lambda\otimes\bar\lambda}\frac{\left(\junk_{\lambda\otimes\bar\lambda}\circ T_{(n)\otimes(m-1,1)}\right)}{\ffac{m}{n}}}
\right),
\]
and it remains to bound the value
\begin{align}
\notag
\Tr{E_{\lambda\otimes\bar\lambda}\frac{\left(\junk_{\lambda\otimes\bar\lambda}\circ T_{(n)\otimes(m-1,1)}\right)}{\ffac{m}{n}}}
& =
\frac{\ffac{m}{n}}{(m-1)d_{\lambda\otimes\bar\lambda}}
\mathrm{sum}\left[E_{(n)\otimes(m-1,1)}\circ E_{\lambda\otimes\bar\lambda}\circ E_{\lambda\otimes\bar\lambda}\right]
\\ & = \label{eq:SumToBound}
\frac{\sum_{\mu\in\classes_n} v_\mu \cdot q_{(n)\otimes(m-1,1)}(\mu)\cdot q^2_{\lambda\otimes\bar\lambda}(\mu)}{\ffac{m}{n}(m-1)d_{\lambda\otimes\bar\lambda}}
\end{align}
\noindent for all $\lambda \vdash n$ with no more than $\sqrt{n}$ cells below the first row. 

Thus to prove Theorem~\ref{thm:main}, it suffices to show for sufficiently large $m$, say $m\ge n^{3+\epsilon}$, that the value of (\ref{eq:SumToBound}), and thus $\eta$, is $\mathrm{O}(1/\sqrt{n})$ for all $\lambda \vdash n$ with no more than $\sqrt{n}$ cells below the first row.  According to the expression (\ref{eq:Krein}) for Krein parameters, (\ref{eq:SumToBound}) equals
\[
\frac{q_{\lambda\otimes\bar\lambda,\,\lambda\otimes\bar\lambda}((n)\otimes(m-1,1))}{d_{\lambda\otimes\bar\lambda}}
= \frac{q_{\lambda\otimes\bar\lambda,\,(n)\otimes(m-1,1)}(\lambda\otimes\bar\lambda)}{m-1},
\]
and therefore the task of bounding (\ref{eq:SumToBound}) is equivalent to bounding the Krein parameters 
$$q_{\lambda\otimes\bar\lambda,\,\lambda\otimes\bar\lambda}((n)\otimes(m-1,1))\quad \text{and}\quad q_{\lambda\otimes\bar\lambda,\,(n)\otimes(m-1,1)}(\lambda\otimes\bar\lambda).$$

\section{Proof of Theorem~\ref{thm:main}}\label{sec:MainProof}
We are now in a position to finish the proof of the main result. In Section~\ref{sec:AdvToKrein} we saw that it suffices to show (\ref{eq:SumToBound}) is $\mathrm{O}(1/\sqrt{n})$ for sufficiently large $m$, which we state as the following claim.

\begin{clm}
	For all $\lambda \vdash n$ with $\ell \leq \sqrt{n}$ cells below the first row, there is a constant $\alpha > 1$ such that 
	\[
	\frac{1}{\ffac{m}{n}} \sum_{(\mu|\rho) \in \mathcal{C}_n}
	\frac{ v_{(\mu|\rho)} \cdot q_{(n)\otimes(m-1,1)}(\mu|\rho)\cdot q_{\lambda\otimes\bar\lambda}^2(\mu|\rho)}
	{d_{(n)\otimes(m-1,1)} \cdot d_{\lambda\otimes\bar\lambda}} = \mathrm{O}(1/\sqrt{n})
	\]
	holds, provided $m \geq n^{2+\alpha}$.
\end{clm}
\noindent We now prove this claim. Throughout the proof, at no loss of generality we shall assume that $m$ and $\sqrt{n}$ are integers to avoid cumbersome notation.
\begin{proof}
	Let $(\mu|\rho) \in \classes_{n,k}$. Since $m \geq n^{2+\alpha}$, Lemma~\ref{lem:dualeig2} implies that $q_{(n)\otimes(m-1,1)}(\mu|\rho) < 0$ if $k=0$, and for $k > 0$ that
	\begin{align*}
	0 \leq q_{(n)\otimes(m-1,1)}(\mu|\rho) &= \frac{(km-n^2)(m-1)}{n(m-n)}
	\leq  \frac{k(m-n)(m-1)}{n(m-n)}
	= \frac{k(m-1)}{n}.
	\end{align*}
	Note that $d_{(n)\otimes(m-1,1)} = (m-1)$. The $\classes_{n,k}$'s partition $\classes_{n}$, so we may rewrite the sum as
	\begin{align*}
	\frac{1}{\ffac{m}{n}} \sum_{(\mu|\rho) \in \mathcal{C}_n}
	\frac{ v_{(\mu|\rho)} \cdot q_{(n)\otimes(m-1,1)}(\mu|\rho)\cdot 	q_{\lambda\otimes\bar\lambda}^2(\mu|\rho)}
	{ d_{(n)\otimes(m-1,1)} \cdot d_{\lambda\otimes\bar\lambda}}  &\leq \frac{1}{\ffac{m}{n}}\sum_{k=1}^n \frac{k}{n} \sum_{(\mu|\rho) \in\classes_{n,k}}
	\frac{ v_{(\mu|\rho)} q_{\lambda\otimes\bar\lambda}^2(\mu|\rho)}
	{  d_{\lambda}d_{\bar\lambda}}.
	\intertext{By Proposition~\ref{prop:idem}, we can write the dual eigenvalue as}
	&= 
	\frac{1}{\ffac{m}{n}}\sum_{k=1}^n \frac{k}{n} \sum_{(\mu|\rho) \in\classes_{n,k}}
	\frac{ v_{(\mu|\rho)} d_{\lambda\otimes\bar\lambda}^2\left[ \omega^{\lambda\otimes\bar\lambda}_{(\mu|\rho)}\right]^2}
	{  d_{\lambda}d_{\bar\lambda}}\\
	&= 
	\frac{d_{\lambda}d_{\bar\lambda}}{\ffac{m}{n}}\sum_{k=1}^n \frac{k}{n} \sum_{(\mu|\rho) \in\classes_{n,k}}
	v_{(\mu|\rho)} \left[ \omega^{\lambda\otimes\bar\lambda}_{(\mu|\rho)}\right]^2.
	\end{align*}
	\noindent 	For any $f \in S_{n,m}$, define $k_f$ so that $n-k_f$ equals the number of non-trivial paths in the cycle-path type of $f$. From the foregoing, it suffices to show 
	\begin{equation}\label{eq:main}
	\frac{d_{\lambda}d_{\bar{\lambda}}}{\ffac{m}{n}} \sum_{k=1}^n \frac{k}{n} \sum_{(\mu|\rho) \in\classes_{n,k}} v_{(\mu|\rho)}  \left[\omega^{\lambda \otimes \bar\lambda}_{(\mu|\rho)} \right]^2  = \frac{d_{\lambda}d_{\bar{\lambda}}}{\ffac{m}{n}} \sum_{f \in S_{n,m}} \frac{k_f}{n}~\left[  \omega^{\lambda \otimes \bar\lambda}(f) {\omega^{\lambda \otimes \bar\lambda}(f)} \right]  = \mathrm{O}(1/\sqrt{n}).
	\end{equation}
	By Proposition~\ref{prop:orth}, we have 
	$$\frac{d_{\lambda}d_{\bar{\lambda}}}{\ffac{m}{n}} \sum_{f \in S_{n,m}} \mathrm{O}(1/\sqrt{n})~\left[  \omega^{\lambda \otimes \bar\lambda}(f) {\omega^{\lambda \otimes \bar\lambda}(f)} \right] = \mathrm{O}(1/\sqrt{n});$$
	therefore, it suffices to show there exists a constant $c > 0$ such that
	$$
	\frac{d_{\lambda}d_{\bar{\lambda}}}{\ffac{m}{n}} \sum_{\substack{f \in S_{n,m} \\ k_f \geq c\sqrt{n} }}  \omega^{\lambda \otimes \bar\lambda}(f) {\omega^{\lambda \otimes \bar\lambda}(f)} = \mathrm{O}(1/\sqrt{n}).
	$$
	For all $0 \leq j \leq k$, define $\classes_{n,k,j} := \left\{(\mu|\rho) \in \classes_{n,k} : |\mu| = j\right\}$, 
	so that $\classes_{n,k} = \bigsqcup_{j=0}^k \classes_{n,k,j}$.
	Recall that $p_{k,j}$ is the probability that the cycle-path type of $f \in S_{n,m}$ lies in $\classes_{n,k,j}$. We have
	\begin{align*}
	\frac{d_{\lambda}d_{\bar{\lambda}}}{\ffac{m}{n}} \sum_{\substack{f \in S_{n,m} \\ k_f \geq c\sqrt{n} }}  \omega^{\lambda \otimes \bar\lambda}(f) {\omega^{\lambda \otimes \bar\lambda}(f)} 
	&= \frac{d_{\lambda}d_{\bar{\lambda}}}{\ffac{m}{n}} \sum_{k = c\sqrt{n}}^n \sum_{j=0}^k \sum_{(\mu|\rho) \in\classes_{n,k,j}} v_{(\mu|\rho)}  |\omega^{\lambda \otimes \bar\lambda}_{(\mu|\rho)}|^2\\
	&\leq d_{\lambda}d_{\bar{\lambda}} \sum_{k = c\sqrt{n}}^n \sum_{j=0}^k \sum_{(\mu|\rho) \in\classes_{n,k,j}} p_{(\mu|\rho)} \\
	&= d_{\lambda}d_{\bar{\lambda}} \sum_{k=c\sqrt{n}}^{n} \sum_{j=0}^k p_{k,j},
	\intertext{where the inequality holds by Proposition~\ref{prop:trivial}. Let $c = 4$. Proposition~\ref{prop:small} with $t = c\sqrt{n}$ implies}
	&\leq  d_\lambda d_{\bar{\lambda}} \cdot n^2 \cdot  \mathrm{O}(1/n^{\alpha c \sqrt{n}}).
	\intertext{By Proposition~\ref{prop:dims}, we have}
	&\leq \mathrm{O}\left( n^2 \cdot \frac{(n^{2 + \alpha})^\ell n^\ell}{n^{\alpha c \sqrt{n}}} \right)\\
	&\leq  \mathrm{O}\left(n^2 \cdot \frac{n^{3\ell}(n^\alpha)^{\ell}}{(n^\alpha)^{3\sqrt{n}}(n^\alpha)^{\sqrt{n}}} \right).
	\intertext{Since $\ell \leq \sqrt{n}$, setting $\alpha = 1+\varepsilon$ gives us} 
	&\leq  \mathrm{O}\left( n^2/n^{3\varepsilon \sqrt{n}} \right) = \mathrm{O}(1/\sqrt{n}).
	\end{align*}
	This proves (\ref{eq:main}), and thus the claim, which completes the proof of the main result.
\end{proof}
\noindent We have made no attempt to improve the dependency $m \geq n^{3+\varepsilon}$, and in fact, we believe that the claim above should be true for all $m \geq n(1+\varepsilon)$ such that $\varepsilon > 0$.

\section*{Acknowledgements}

The authors would like to thank Aleksandrs Belovs, Chris Godsil, and J\'{e}r\'{e}mie Roland for insightful discussions. The authors would also like to thank an anonymous referee for comments that substantially improved the readability of this paper.
This work was partially supported by the Singapore Ministry of Education and the National Research Foundation under grant R-710-000-012-135.

\bibliographystyle{alpha}
\bibliography{../../../master}

\end{document}